\numberwithin{equation}{section}
\renewcommand{\email}[2][]{
\ifx\emails\@empty\relax\else{\g@addto@macro\emails{,\space}}\fi
\@ifnotempty{#1}{\g@addto@macro\emails{\textrm{(#1)}\space}}
\g@addto@macro\emails{#2}
}
\theoremstyle{plain}
\newtheorem{theorem}{Theorem}[section]
\newtheorem*{theorem*}{Theorem}
\newtheorem{proposition}[theorem]{Proposition}
\newtheorem{corollary}[theorem]{Corollary}
\newtheorem{lemma}[theorem]{Lemma}
\theoremstyle{definition}
\newtheorem{assumption}[theorem]{Assumption}
\newtheorem{remark}[theorem]{Remark}
\newtheorem{example}[theorem]{Example}
\DeclareMathOperator{\rvar}{\overline{\rm VaR}}
\DeclareMathOperator{\VaR}{\rm VaR}
\DeclareMathOperator{\var}{\rm VaR}
\DeclareMathOperator{\es}{\rm ES}
\newcommand{\R}{\mathbb{R}} 
\newcommand{\Q}{\mathbb{Q}} 
\newcommand{\E}{\mathbb{E}}
\newcommand{\ind}{\mathbf{1}}
\renewcommand{\P}{\mathbb{P}}
\newcommand{\Norm}{\|\cdot\|}
\newcommand{\N}{\mathbb{N}}
\newcommand{\ba}{\mathbf{ba}}
\newcommand{\ca}{\mathbf{ca}}
\newcommand{\CF}{\mathcal F}
\newcommand{\CD}{\mathcal D}
\newcommand{\CX}{\mathcal X}
\newcommand{\CY}{\mathcal Y}
\newcommand{\CI}{\mathcal I}
\newcommand{\ph}{\varphi}
\newcommand{\peq}{\preceq}
\newcommand{\Entr}{\textnormal{Entr}}
\newcommand{\LC}{\mathcal{LC}}
\newcommand{\core}{\mathcal{C}}
\newcommand{\acore}{\mathcal{A}}
\newcommand{\LA}{\mathcal{LA}}
\newcommand{\up}{\mathrm{up}}
\newcommand{\lo}{\mathrm{lo}}
\title{Eliciting reference measures of law-invariant functionals}
\author[F.-B.~Liebrich]{\small Felix-Benedikt Liebrich}
\address{\small Amsterdam School of Economics,\\
University of Amsterdam, Netherlands}
\email{f.b.liebrich@uva.nl}
\author[R.~Wang]{\small Ruodu Wang}
\address{\small Department of Statistics and Actuarial Science,\\
University of Waterloo, Canada}
\email{wang@uwaterloo.ca}
\date{February 9, 2026}
\begin{document}

\begin{abstract}
Law-invariant functionals are central to risk management and assign identical values to random prospects sharing the same distribution under an atomless reference probability measure. This measure is typically assumed fixed.
Here, we adopt the reverse perspective: given only observed functional values, we aim to either recover the reference measure or identify a candidate measure to test for law invariance when that property is not {\em a priori} satisfied. 
Our approach is based on a key observation about law-invariant functionals defined on law-invariant domains. These functionals define lower (upper) supporting sets in dual spaces of signed measures, and the suprema (infima) of these supporting sets---if they exist---are scalar multiples  of the reference measure. In specific cases, this observation can be formulated as a sandwich theorem.
  We illustrate the methodology through a detailed analysis of prominent examples: the entropic risk measure, Expected Shortfall, and Value-at-Risk. For the latter, our elicitation procedure initially fails due to the triviality of supporting set extrema. We therefore develop a suitable modification.

\smallskip

\noindent\textsc{Keywords:} Law-invariant functionals \and reference measure \and distortion riskmetrics \and Value-at-Risk

\noindent\textsc{JEL Classification:} C60 $\cdotp$ C71 $\cdotp$ D81

\noindent\textsc{Mathematics Subject Classification (MSC2020):} 46N10 $\cdotp$ 91G70
\thanks{RW is supported by the Natural Sciences and Engineering Research Council of Canada (CRC-2022-00141, RGPIN-2024-03728).}
\end{abstract}

\maketitle


\setstretch{1.15}

\section{Introduction}
\label{intro}
Law-invariant functionals are omnipresent in economics, finance, and risk management, in particular because they allow for standard statistical analysis. 
Mathematically speaking, a functional $\ph$ on a set $\mathcal D$ of random variables over a probability space $(\Omega,\Sigma,\P)$ is called law invariant with respect to $\P$ if it assigns the same value to random variables in $\mathcal D$ with the same distribution under $\P$. 
The probability measure 
$\P$ is then called a reference measure for $\ph$.
To ensure a seamless mathematical study of law-invariant functionals,  $\P$ is typically required to be atomless.

Law-invariant risk measures have been widely studied in the mathematical finance literature. Earlier contributions on their  theoretical foundation include Kusuoka \cite{K01}, Frittelli and Rosazza Gianin \cite{Frittelli}, Weber~\cite{W06}, Filipovi\'c and Svindland~\cite{FilSvi}, and Bellini et al.\ \cite{Bellini}, among others. 
For insights into their statistical properties, see, e.g., Cont et al.\ \cite{Cont}, Shapiro \cite{Shapiro}, and Kr\"atschmer et al.~\cite{KSZ}. A comprehensive review is provided by He et al.\ \cite{Review}.
Law-invariant functionals are widely used in contexts  beyond risk measures, such as  economic decision principles (e.g., Yaari \cite{Yaari}), insurance premia (e.g., Wang et al.~\cite{Insurance}), and deviation measurement (e.g., Rockafellar et al.~\cite{RUZ06}).

There are two common but different ways of formulating law-invariant functionals. 
The first formulation, as mentioned above, treats  them as mappings $\ph$ from $\mathcal D$ to $\R$ (or the extended real line), as in, e.g., Frittelli and Rosazza Gianin~\cite{Frittelli} and Kusuoka~\cite{K01}. 
The second formulation  treats them as mappings $\phi$  from a set of distributions  to $\R$, as in, e.g., Kr\"atschmer et al.~\cite{KSZ} and Weber~\cite{W06}.
These two approaches are often argued as being equivalent, and connected via the relation $\ph(X)=\phi (\mathbb P\circ X^{-1})$ for $X\in \mathcal D$.
It is clear that the above equivalence relies on the specification of  a probability measure $\P$ that is assumed  fixed and known.  
This reflects standard practice in risk measurement for 
regulatory capital calculations, portfolio optimisation, performance analysis, and capital allocation. 

In contrast to the procedure of first  specifying the probability measure and then computing the  value of the risk measure, this paper takes the {\em reverse} perspective: we assume that the values of the functional $\ph$ are observable, but the underlying reference measure is unknown. 
Our setting contains another layer of agnosticism: we may not know if the functional is law invariant to begin with. Liebrich \cite{Liebrich} presents general conditions under which reference measures are unique, provided they exist. 
Our aim is to recover   the (often unique) reference probability measure from a potentially black-box risk measurement procedure (henceforth, ``the functional''), if such a measure exists. 
The procedure of identifying the probability measure underlying a functional is sometimes called elicitation (e.g., Kadane and Winkler \cite{KW88}), but 
it should not be confused with the literature on the {\em elicitability} of risk measures, such as Ziegel~\cite{Z16}, Kou and Peng~\cite{KP16}, Fissler and Ziegel~\cite{FZ16}, and Embrechts et al.~\cite{EMWW21}, where elicitability means risk being a minimiser of an expected loss.

In sum, the goal of this paper is twofold: 
\begin{itemize}
\item[(a)]
to identify a procedure as general and unifying as possible that allows one to elicit the reference measure of law-invariant functionals; and 
\item[(b)] in cases where law invariance is not assumed, to produce a candidate measure for which one can test law invariance. 
\end{itemize}

To illustrate the setting, consider a regulatory authority evaluating a large set of risk estimates provided by a financial institution for payoffs modelled by random variables in a set $\mathcal D$. 
While the regulator may prescribe the risk measure---such as Expected Shortfall under the current Basel Accords---the institution's internal probability model may be unknown or not truthfully disclosed. 
This aligns with the framework of Fadina et al.~\cite{Fadina}, which addresses risk measurement under unfixed probabilities. 
Without information about the internal model, a realistic assessment of the institution's risk management practices may not be possible.
While this illustration motivates our investigation, we emphasise that our results are purely theoretical and do not, e.g., provide an algorithmic approach to addressing the aforementioned issue.

The elicitation of probability measures from observable decisions is a classical topic in decision theory, dating back to Ramsey~\cite{Ramsey}, de Finetti~\cite{Finetti}, Savage~\cite{Savage}, and Anscombe and Aumann~\cite{AA}; see also Machina and Schmeidler~\cite{MJ} for an approach beyond expected utility. 
Also, ``law invariance'' goes under the name of ``probabilistic sophistication'' in that field, often with some additional minor properties (like monotonicity). 
Our methodology diverges from these established theories by proposing a broadly applicable, model-agnostic procedure that does not rely on any particular decision-making framework.
This involves taking the numerical representations of preferences as given and focusing on the observable functional values.

To convey the core idea of our approach, we consider the prominent case of the Expected Shortfall (ES) risk measure as an illustration.
Let us recall that the ES of a bounded random variable $X$ under the probability measure $\P$ and for a given level $\alpha\in[0,1)$ is
\begin{equation*}\label{eq:ESintro}\es_\alpha^\P(X)=\frac 1 {1-\alpha}\int_\alpha^1\var_s^\P(X){\rm d}s,\end{equation*}
where for $s\in(0,1)$
\begin{equation}\label{eq:VaRintro}
\var_s^\P(X)=\inf\{x\in\R\colon \P[X\le x]\ge s\}.\end{equation}
Let $A$ be an event and denote its indicator by $\ind_A$.
If the $\P$-probability of $A$ is small enough, i.e., less than $1-\alpha$, then 
$$\es^\P_\alpha(\ind_A)=\frac{\P[A]}{1-\alpha}.$$
Hence, if the reference measure $\P$ is atomless, it can be found by splitting each event $B$ into sufficiently many pairwise disjoint small pieces $A_1,\dots,A_n$ and computing 
$$\P[B]=(1-\alpha)\sum_{i=1}^n\es^\P_\alpha(\ind_{A_i}).$$
Of course, it is {\em a priori} unclear what ``sufficiently small'' means in this case, as the answer would require knowledge of both $\P$ and $\alpha$. 
Without this information, the splitting operation should be understood as a limiting procedure. 

Now we take a dual perspective and look at the set $\mathcal L$ of all {\em measures} $\mu$ which respect the ES constraint, i.e., which satisfy the inequality
$\mu(A)\le \es_\alpha^\P(\ind_A)$ for all events $A\in\Sigma$.
We will call such a set a {\em supporting set}. 
Thus, if $\P[A]$ is sufficiently small, we also have 
$$\mu(A)\le \frac{\P[A]}{1-\alpha}.$$
This suggests that the measure $\frac 1{1-\alpha}\P$ could be the {\em least upper bound} (the supremum) of the set $\mathcal L$ in the space of measures. 
We shall not only confirm this conjecture, but in a nutshell show that suprema of such supporting sets can in many cases be used to elicit $\P$. 
Moreover, if it is unknown whether a functional is law invariant, but an atomless measure can be computed as a supremum as above, then the probability measure obtained by normalisation is the only candidate for which law invariance needs to be checked.

In summary, we shall establish a clear dual link between large classes of functionals and their reference probabilities, including the observation that the latter are ``dual'' and not ``primal objects''.

The paper is organised as follows. 
Sect.~\ref{sec:first} contains the main results  on general law-invariant functionals.
It establishes that the suprema of lower supporting sets (as in the ES example) and the infima of upper supporting sets of law-invariant functionals on law-invariant domains are directly linked to the reference measure, in the sense that they are scalar multiples of it.
We also examine the role of countable vs.\ finite additivity, i.e., computing suprema and infima in the dual space of bounded random variables, and show that our results remain stable even under finitely additive reference probabilities. 

In Sect.~\ref{sec:distortion}, we focus on distortion riskmetrics in the sense of Wang et al.~\cite{Riskmetrics}. 
These functionals allow a significant reduction in complexity, as they are fully determined by their values on the domain of indicator random variables. 
However, this domain is need not be law invariant.
In this context, a law-invariant functional corresponds to a law-invariant cooperative game, with supporting sets known as the (loose) core and (loose) anticore, following Lehrer and Teper~\cite{Lehrer}. 
We present direct analogues of the results from Sect.~\ref{sec:first}, develop a geometric interpretation as sandwich theorems \`a la Kindler~\cite{Kindler} in case of sub- and superadditive games, and highlight key caveats.

The prominent examples of entropic risk measure, Expected Shortfall, and the Value-at-Risk are presented in Sect. \ref{sec:ex}. 
There, we also illustrate the dependence of infimum/supremum of supporting sets on the domain of definition of the functional. 
The VaR case is notable because infimum and supremum of both supporting sets exist but are trivial, revealing nothing about the reference probability. Given VaR’s practical importance in risk management, a suitably modified elicitation approach is developed in Sect.~\ref{sec:VaR}. 
Some mathematical preliminaries and auxiliary results are relegated to the appendix.

\section{Notation and preliminaries} 

Throughout this paper, $\Omega$ is a nonempty set of states (or scenarios), and $\Sigma \subseteq 2^\Omega$ denotes the collection of observable events.  We assume at minimum that $\Sigma$ is an algebra, and sometimes adopt the stronger, but standard assumption that $\Sigma$ is a $\sigma$-algebra.

A signed charge is a set function $\mu \colon \Sigma \to \mathbb{R}$ that is additive, meaning
$\mu(A \cup B) = \mu(A) + \mu(B)$ for all disjoint events $A,B \in \Sigma$.
In this paper, signed charges stem from the spaces $\ba$ of bounded signed charges or $\ca$ of bounded signed measures; see Appendix~\ref{app:definitions} below for more details on these two spaces. As standard notation in functional analysis, $\ba$ stands for \textbf bounded \textbf additive functions on $\Sigma$
and $\ca$ stands for bounded \textbf countably \textbf additive functions on $\Sigma$. 

Elements of $\ba$ and $\ca$ will be denoted by Greek letters such as $\mu$ and $\nu$.
A probability charge $P$ is a signed charge that is nonnegative and satisfies $P[\Omega]=1$; we denote such charges by $P$ or $Q$.  
When $\Sigma$ is a $\sigma$-algebra, we also consider countably additive probability charges (i.e., true probability measures), which we denote by $\P$ or $\Q$ to emphasise countable additivity.
A probability charge $P$ on $\Sigma$ has convex range if, for all $A\in\Sigma$,
\begin{center}$\{P[B]\colon B\in\Sigma,~B\subseteq A\}=\big[0,P[A]\big].$\end{center}
This property is called {\em strong nonatomicity} in \cite[Definition 5.1.5]{BR}, which reflects that a probability measure has convex range if and only if it is atomless. 

If $\Sigma$ is a $\sigma$-algebra, the space of all bounded, $\Sigma$-measurable and real-valued functions is denoted by $B(\Sigma)$. 
If $\Sigma$ is only an algebra, we avoid technicalities and instead consider the space $B_s(\Sigma)$ of all simple random variables, i.e., the span of all indicator functions of events in $\Sigma$. 
Both spaces can be equipped with the supremum norm $\Norm_\infty$.
Their elements will be referred to as random variables and denoted by capital letters. 

Throughout the paper, we will impose:  
\begin{assumption}\label{ass:structure}
    Either
    \begin{enumerate}
        \item[(A)] $\Sigma$ is a $\sigma$-algebra, functionals are defined on random variables in $B(\Sigma)$, and an atomless probability measure $\P$ on $\Sigma$ will be of particular interest; or
        \item[(B)] $\Sigma$ is an algebra, functionals are defined on random variables in $B_s(\Sigma)$, and a convex-ranged probability charge $P$ will play a prominent role. 
    \end{enumerate}
\end{assumption}
Note that we always tacitly assume that $\Sigma$ is a $\sigma$-algebra when dealing with countable additivity.

Two random variables $X,Y\in B(\Sigma)$ (or $B_s(\Sigma)$) are equally distributed under $P$ (denoted $X\sim_P Y$) if, for all intervals $I\subseteq \R$, $P[X\in I]=P[Y\in I]$.
If the probability charge is a probability measure $\P$, $X \sim_\P Y$ holds if and only if the Borel probability measures $\P \circ X^{-1}$ and $\P \circ Y^{-1}$ on $\R$ agree---that is, for every Borel set $A \subseteq \R$, we have $\P[X \in A] = \P[Y \in A]$. 
 
Moreover, for an event $A\in\Sigma$ with $P[A]>0$, $P^A\colon \Sigma\to[0,1]$ denotes the conditional probability charge defined by
$$P^A[B]=\frac{P[A\cap B]}{P[A]}.$$
If $\P$ is a probability measure, a statement holds {\em $\P$-almost surely} ($\P$-a.s.)~if it holds with $\P$-probability 1. 

The integral of a simple random variable with respect to $\mu\in\ba$ will be denoted by $\E_{\mu}$. Similarly, if $\Sigma$ is a $\sigma$-algebra, we can extend $\E_{\mu}$ to the Dunford-Schwartz integral of a bounded random variable. We refer to \cite[Sections 11.1--11.2]{Ali}.

Let $\CX$ be an $\R$-vector space. 
A reflexive, transitive, antisymmetric binary relation $\peq$ on $\CX$ is a vector space order if
$X\peq Y$ implies for all $t\ge 0$ and all $Z\in\mathcal X$ that $tX+Z\peq tY+Z$. An ordered vector space $(\CX,\peq)$ is a vector lattice if all $X,Y\in\mathcal X$ have a supremum, a least upper bound with respect to $\peq$. 

Let $\CY \subseteq \CX$ be nonempty. An element $Y^\star \in \CX$ is called an {\em upper bound} of $\CY$ if $Y \peq Y^\star$ for all $Y \in \CY$. The set $\CY$ is said to be {\em upper bounded} if such an upper bound exists. The {\em supremum} of $\CY$, if it exists, is the least element among all upper bounds of $\CY$. 
Analogously, the {\em infimum} of $\CY$ is the greatest lower bound, if it exists.
For more information, we refer the reader to \cite[Chapters 8--10]{Ali}.

The most important vector lattices in this paper will be $\ba$ and $\ca$ which are endowed with the setwise order, i.e., $\mu\le \nu$ if $\mu(A)\le \nu(A)$ holds for all $A\in\Sigma$. 
In particular, every upper (lower) bounded subset of $\ba$ has a supremum (infimum); see \cite[Theorems 8.24 \& 9.11]{Ali}. 
Function spaces like $B_s(\Sigma)$ and $B(\Sigma)$ are also vector lattices when endowed with the pointwise order.

\section{Supporting sets identify the reference measure}\label{sec:first}

\subsection{Countably additive reference measures}\label{sec:countable}

Throughout this subsection, we impose Assumption~\ref{ass:structure}(A), i.e., $\Sigma$ is a $\sigma$-algebra and $\P$ is an atomless probability measure. Also, let $\mathcal D\subseteq B(\Sigma)$ be nonempty.
A functional $\ph\colon \CD\to\R$ is invariant with respect to $\P$ (or $\P$-invariant) if 
$$X,Y\in \mathcal D\text{ and }X\sim_\P Y\quad\implies\quad \ph(X)=\ph(Y).$$
Similarly, a set $\mathcal D\subseteq B(\Sigma)$ is invariant with respect to $\P$ (or $\P$-invariant) if 
$$X\in\CD,~Y\in B(\Sigma)\text{ and }X\sim_\P Y\quad\implies\quad Y\in\CD.$$
In this context, we shall refer to $\P$ as the {\em reference (probability) measure}.

Two important objects for our study are the \emph{lower supporting set} 
\begin{equation}\label{eq:L}\mathcal L:=\{\mu\in\ca\colon\E_\mu[X]\le \ph(X)\text{ for all }X\in\mathcal D\}\end{equation}
and the {\em upper supporting set}
\begin{equation}\label{eq:U}\mathcal U:=\{\mu\in\ca\colon\E_\mu[X]\ge \ph(X)\text{ for all }X\in\mathcal D\}\end{equation}
of $\ph$.
They collect all signed measures whose integrals over $\mathcal D$ respect the constraint given by $\ph$ throughout.

Note that, like $\P$, every charge in $\mathcal L$ and $\mathcal U$ is countably additive, reflecting the general preference for countable additivity in the literature. 
Nonatomicity of $\P$, however, implies that $\Sigma$ cannot have finite/countable cardinality, and the norm dual space of $B(\Sigma)$ is, up to an isometric isomorphism, actually given by $\ba$ and significantly larger than $\ca$.
This suggests that finite  additivity seems more consistent with the underlying measurable structure.
 
Recognising this discrepancy naturally leads to the {\em a priori} larger alternative supporting sets 
\begin{equation}\label{eq:Lf}
    \mathcal L^f:=\{\mu\in\ba\colon~\E_\mu[X]\le \ph(X)\text{ for all }X\in\CD\}\supseteq\mathcal L
\end{equation}
and 
\begin{equation}\label{eq:Uf}
    \mathcal U^f:=\{\mu\in\ba\colon~\E_\mu[X]\ge \ph(X)\text{ for all }X\in\CD\}\supseteq\mathcal U.
\end{equation}
The superscript $f$ in the notation emphasises that the elements in the respective set are only finitely additive.

Our first main result gives a necessary condition for $\P$ to be a reference measure for $\ph$, expressed in terms of supporting sets.
Remarkably, the condition remains the same whether one considers the countably additive supporting sets $\mathcal L$ and $\mathcal U$ or the finitely additive supporting sets $\mathcal L^f$ and $\mathcal U^f$.

\begin{theorem}\label{main1}
Let  $\mathcal D\subseteq B(\Sigma)$ and $\ph\colon \CD\to\R$ both be invariant with respect to the atomless probability measure $\P$. 
Let $\mathcal L,U,\mathcal L^f$ and $\mathcal U^f$ be the associated supporting sets defined by \eqref{eq:L}--\eqref{eq:Uf}.
\begin{enumerate}
\item[(i)] 
The supremum $\sup\mathcal L^f$ exists in $\ba$ if and only if the supremum $\sup\mathcal L$ exists in $\ca$, in which case there is a constant $a\in\R$ such that 
\[a\P=\sup\mathcal L=\sup\mathcal L^f\in \ca.\]
\item[(ii)] Likewise, the infimum $\inf\mathcal U^f$ exists in $\ba$ if and only if the infimum $\inf\mathcal U$ exists in $\ca$, in which case there is a constant $b\in\R$ such that 
\[b\P=\inf\mathcal U=\inf\mathcal U^f\in \ca.\]
\end{enumerate}
\end{theorem}

Note that Theorem~\ref{main1} does not claim that $\sup\mathcal L$ is an element of $\mathcal L$ itself; we may well find $X\in\mathcal D$ such that $\ph(X)<\E_{\sup\mathcal L}[X]$. 
Likewise, we often have $\inf\mathcal U\notin\mathcal U$, i.e., $\ph(X)>\E_{\inf\mathcal U}[X]$ holds for some $X\in\mathcal D$.
Moreover, the result fails without assuming nonatomicity of $\P$, as the subsequent example shows.

\begin{example}\label{ex1}
Consider $\Omega:=\{0,1\}$ together with the power set $\Sigma=2^\Omega$. 
Moreover, suppose $\P[\{0\}]=1-\P[\{1\}]=\frac 2 3$.
One verifies that $X,Y\in B(\Sigma)$ satisfy $X\sim_\P Y$ if and only if $X=Y$.
This shows that every nonempty set $\mathcal D\subseteq B(\Sigma)$ of such random variables is $\P$-invariant. 
Likewise, the functional $\ph\colon\mathcal D\to\R$ defined by 
\begin{center}$\ph(X)=X(1)$\end{center}
is $\P$-invariant independent of the domain of definition $\mathcal D$. 
Now suppose that $\mathcal D$ contains the random variables
$X_t:=t\ind_{\{0\}}+\ind_{\{1\}}$, $t\in\R$, and let $\mu\in\mathcal L$.
Then
\[
1=\ph(X_t)\ge \sup_{t\in\R}\E_\mu[X_t]
=\mu(\{1\})+\sup_{t\in\R}t\mu(\{0\}),
\]
implying $\mu(\{0\})=0$ and $\mu(\{1\})\le 1$.
Analogously, any $\mu\in\mathcal U$ satisfies $\mu(\{0\})=0$ and $\mu(\{1\})\ge 1$.
Hence, the probability measure $\P^\star$ placing full mass on $\{1\}$ is the supremum (infimum) of the lower (upper) supporting set and not a multiple of $\P$. 
\end{example}

Theorem~\ref{main1} is the first main result of the paper and therefore warrants a thorough discussion before we proceed with the mathematical development.
First, how could one (hypothetically) use Theorem~\ref{main1} to elicit the reference measure $\P$?  
Complete knowledge of the values that $\ph$ takes on $\mathcal{D}$ can allow us to identify all linear and (order-)\-continuous operators $\E_\mu[\cdot]$ that satisfy the constraint imposed by $\ph$.
This leads to the supporting sets $\mathcal L,\mathcal U,\mathcal L^f$, and $\mathcal U^f$, respectively. For the sake of simplicity, we focus on the sets $\mathcal L$ and $\mathcal U$ of signed measures in this illustration. 
If the supremum of $\mathcal{L}$ (or the infimum of $\mathcal{U}$) exists and is denoted by $\mu^\star$, three cases arise:
\begin{enumerate}
\item[Case 1:] $\mu^\star = 0$: No meaningful conclusion can be drawn about the reference measure.
\item[Case 2:] $\mu^\star \neq 0$, but is not a multiple of an atomless $\P$ under which $\mathcal{D}$ is invariant: then $\ph$ cannot be law invariant.
\item[Case 3:] $\mu^\star = c\P$ for some $c \in \R \setminus\{0\}$ and an atomless $\P$ under which $\mathcal{D}$ is invariant: then $\P$ uniquely identifies the only viable candidate for a reference measure consistent with $\ph$.
We distinguish two sub-cases:
\begin{enumerate}
\item[(a)] If $\ph$ is known to be law invariant, the reference measure must be $\P$.
\item[(b)] If law invariance is not assumed, it suffices to test $\ph$ for invariance under $\P$. For example, one can try to find $X, Y \in \mathcal{D}$ with the same $\P$-distribution but $\ph(X) \neq \ph(Y)$ to disprove $\P$-invariance.
\end{enumerate}
\end{enumerate}

Second, we present an illustrative example to help orient the reader.
\begin{example}\label{ex:RM}
Suppose that $\ph$ is a positively homogeneous and cash-additive risk measure on $\CD=B(\Sigma)$, i.e., 
\begin{itemize}
    \item $\ph(tX)=t\ph(X)$ for all $(X,t)\in B(\Sigma)\times(0,\infty)$, 
    \item $\ph(X+c)=\ph(X)+c$ for all $(X,c)\in B(\Sigma)\times \R$.
\end{itemize}
Its {\em convex conjugate} is given by 
\begin{equation}\label{def:conv conj}\ph^*(\mu):=\sup\{\E_\mu[X]-\ph(X)\colon X\in B(\Sigma)\},\quad \mu\in\ba.\end{equation}
If instead of the norm dual $\ba$ one prefers to use the $\langle B(\Sigma),\ca\rangle$-dual pairing, one can restrict $\ph^*$ to $\ca$. 
The lower supporting set $\mathcal L^f$ is just the effective domain of $\ph^*$, i.e., the set $\{\ph^*<\infty\}$. Analogously, $\mathcal L$ is the effective domain of the restricted convex conjugate. 
The latter is sometimes referred to as the ``scenario set'' associated with $\ph$, as it consists only of probability measures that represent relevant probabilistic scenarios for the computation of $\ph$ by dual representation.

Now suppose the values of $\varphi$ are only known on the smaller $\P$-invariant domain 
\begin{equation}\label{widetilde D}
    \widetilde{\CD}=\{X\in B(\Sigma)\colon X\ge 0~\P\text{-a.s.}\}\subsetneq\mathcal D=B(\Sigma).
\end{equation}
It is usually understood as representing true losses, in contrast to net losses with upside potential.
The strict inclusion makes the lower supporting set $\widetilde{\mathcal L}$ associated with $\ph|_{\widetilde{\mathcal D}}$ bigger than $\mathcal L$. 
Also, the signed measures $\mu\in\widetilde{\mathcal L}$ will not automatically satisfy the normalisation $\mu(\Omega)=1$ anymore.
\end{example}

The situation in the preceding example that a risk measure is only evaluated on true losses rather than all bounded net losses is common. 
While the elements of, e.g., the lower supporting set are not all interpretable as probabilistic scenarios in this case anymore, such an interpretation is desirable in decision frameworks dealing with uncertainty. 
Thus, it is only natural to ask if the assertion of Theorem~\ref{main1} is preserved if a normalisation constraint that maintains the link with probabilistic scenarios---like $\E_\mu[1]=1=\ph(1)$---is added to the definition of the supporting sets. Note that this addition is precisely what distinguishes the loose (anti)core of a game from its (anti)core in Sect.~\ref{sec:distortion} below.

\begin{corollary}\label{cor:remark}
In the situation of Theorem~\ref{main1}, fix a constant $c\in \CD$. 
Then Theorem~\ref{main1} holds verbatim 
 if $\mathcal L,\mathcal U, \mathcal L^f$ and $\mathcal U^f$ are replaced by 
\begin{align*}
&\mathcal L_c
= \{\mu \in \mathcal L \colon \E_\mu[c] = \ph(c)\},
&\mathcal U_c
= \{\mu \in \mathcal U \colon \E_\mu[c] = \ph(c)\},
\\
&\mathcal L_c^f
= \{\mu \in \mathcal L^f \colon \E_\mu[c] = \ph(c)\},
&\mathcal U_c^f
= \{\mu \in \mathcal U^f \colon \E_\mu[c] = \ph(c)\}.
\end{align*}
\end{corollary}

Third, as explained above, both notions of supporting sets---$\sigma$-additive and finite\-ly additive---are grounded in sound considerations. 
Signed measures are often easier and more convenient to work with, while charges are more natural from a functional-analytic perspective. 
Hence, an additional message of Theorem~\ref{main1} is that, provided an atomless and countably additive reference probability measure $\P$ exists, these two perspectives are equivalent and yield the same results in our approach.
This observation mirrors known results on the ``automatic Fatou property'' of law-invariant (quasi)convex functionals; see Jouini et al.~\cite{JST}, Liebrich and Munari \cite{LiebrichMunari} and Svind\-land~\cite{Svindland}. 
These reveal that the dual description of such functionals defined on bound\-ed random variables only requires the $\sigma$-additive elements of the space $\ca$ and not the full dual space $\ba$. 
In the same spirit, our result demonstrates that computing supremum or infimum in the smaller space $\ca$ or the larger space $\ba$ does not entail any differences.

Fourth, the application of Theorem~\ref{main1} outlined above treats the reference probability measure of $\ph$ as {\em a priori} unknown, but draws heavily from the  assumption that the primal domain $\CD$ is already $\P$-invariant. 
While this may seem paradoxical, it is a crucial assumption that cannot be dropped without risking that the result fails. 

\begin{example}\label{ex2}
    Let $\P,\Q$ be two atomless probability measures on $(\Omega,\Sigma)$ that are equivalent (i.e., they share the same set of null events) but such that their maximum $\P\vee\Q$ is linearly independent both of $\P$ and $\Q$. This is the case if the $\P$-density of $\Q$ is nonconstant and not bounded away from 0.  
    Define the $\Q$-invariant domain of definition 
    $$\mathcal D:=\{X\in B(\Sigma)\colon \E_\Q[X]\le 0\text{ or }X\text{ is }\Q\text{-a.s.\ constant}\}$$
    and the $\P$-invariant functional 
    $\ph\colon\mathcal D\to\R$ by
    $$\ph(X):=\max\{\E_\P[X],0\}.$$
    We claim that $\mathcal L$ coincides with the convex hull co$(\{0,\P,\Q\})$.
    The inclusion ``$\subseteq$" holds by construction.  
    Conversely, let $\mu\in\mathcal L$.
    As $1\in\mathcal D$, we obtain $\mu(\Omega)=\E_\mu[1]\le 1$. 
    Assume towards a contradiction that $\mu\notin\mathrm{co}(\{0,\P,\Q\})$. We can then find $X\in B(\Sigma)$ such that 
    $0\le s:=\max\{0,\E_\P[X],\E_\Q[X]\}<\E_\mu[X]$.
    Now, the random variable $Y:=X-s$ lies in $\mathcal D$,
    $\ph(Y)=0$, and $\E_\mu[Y]=\E_\mu[X]-s\mu(\Omega)>s\big(1-\mu(\Omega)\big)\ge 0$,
    contradicting that $\mu\in\mathcal L$. 
    Summing up, $\mathcal L=\mathrm{co}(\{0,\P,\Q\})$ and $\sup\mathcal L=\P\vee\Q$, which is not a multiple of $\P$ or $\Q$ by assumption. 
\end{example}

While the results in Sect.~\ref{sec:distortion} below are more robust against the criticism of {\em a priori} knowledge about the $\P$-invariance of $\mathcal D$, we can already offer an argument here to address potential concerns.
For example, a domain like $\widetilde{\CD}$ in equation~\eqref{widetilde D} is a natural choice and simultaneously $\Q$-invariant with respect to any probability measure $\Q$ equivalent to the true reference measure $\P$. 
Thus, the only information required about the reference measure in advance is the {\em equivalence class} to which it belongs.

\subsection{Finitely additive reference probabilities}

As explained in the preceding subsection, finitely additive charges may be more natural in a setting where the $\sigma$-algebra $\Sigma$ is rich.
This motivated the introduction of the supporting sets $\mathcal L^f$ and $\mathcal U^f$. We now go one step further by replacing countable additivity of the reference probability measure with finite additivity.

In this subsection, we impose Assumption~\ref{ass:structure}(B), i.e., $\Sigma$ is a mere algebra rather than a $\sigma$-algebra and $P$ is a convex-ranged (finitely additive)  probability charge on $\Sigma$.
Random variables $X,Y\in B_s(\Sigma)$ are said to be equidistributed under $P$ ($X\sim_P Y$) if, for all $x\in\R$, $P[X=x]=P[Y=x]$ holds. 
$P$-invariance of a subset $\CD\subseteq B_s(\Sigma)$ or a functional $\ph$ are defined in analogy to $\P$-invariance. In this case, we call $P$ the {\em reference probability}.  

Theorem~3.6 extends Theorem~3.1 to the setting of finitely additive reference probabilities, replacing the assumption of a countably additive atomless reference measure. 
All previous conclusions remain valid, making this theorem the second main result of our paper.

\begin{theorem}\label{main4}
Let  $\mathcal D\subseteq B_s(\Sigma)$ and $\ph\colon \CD\to\R$ both be invariant with respect to the convex-ranged probability charge $P$. 
Moreover, define the supporting sets $\mathcal L^f$ and $\mathcal U^f$ by \eqref{eq:Lf} and \eqref{eq:Uf}, respectively. 

\begin{enumerate}
\item[(i)]  If 
$\sup\mathcal L^f$ exists in $\ba$, then there is a constant $a\in\R$ such that 
\[\sup\mathcal L^f=a P.\]
\item[(ii)] Likewise, if $\inf\mathcal U^f$ exists in $\ba$, then there is a constant $b\in\R$ such that 
\[\inf\mathcal U^f=b P.\]
\end{enumerate}
\end{theorem}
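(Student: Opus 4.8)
The plan is to prove statement (i) only, since (ii) follows by applying (i) to $-\ph$; write $\nu:=\sup\mathcal L^f\in\ba$. The entire argument reduces to one structural fact: the charge $\nu$ is \emph{$P$-symmetric}, meaning that $\nu(A)$ depends on $A\in\Sigma$ only through the value $P(A)$. Granting this, the conclusion is immediate and uses only convex range. For $t\in[0,1]$ set $g(t):=\nu(A)$ for any $A$ with $P(A)=t$; convex range guarantees such an $A$ exists and $P$-symmetry makes $g$ well defined. Given $s,t\ge 0$ with $s+t\le 1$, convex range produces disjoint $A,B$ with $P(A)=s$, $P(B)=t$, and additivity of the charge $\nu$ yields $g(s+t)=\nu(A\cup B)=\nu(A)+\nu(B)=g(s)+g(t)$. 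Thus $g$ solves Cauchy's equation on $[0,1]$ and is bounded (as $\nu\in\ba$ is a bounded charge), hence linear: $g(t)=at$ with $a=g(1)=\nu(\Omega)$. Therefore $\nu(A)=aP(A)$ for all $A$, i.e.\ $\nu=aP$.

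To establish $P$-symmetry I would import the rearrangement functional used in the proof of Theorem~\ref{main2}, in its finitely additive incarnation. For $\mu\in\mathcal L^f$ define
$$\rho_\mu(X):=\sup_{Y\sim_P X}\E_\mu[Y],\qquad X\in B_s(\Sigma).$$
Using the convex range of $P$ (the finitely additive analogue of Lemma~\ref{lem:sup}), $\rho_\mu$ is a continuous, sublinear, $P$-invariant functional on $B_s(\Sigma)$, and $P$-invariance of $\CD$ and $\ph$ together with $\mu\in\mathcal L^f$ give $\rho_\mu(X)\le\ph(X)$ for every $X\in\CD$. Consequently every element of the subdifferential $\partial\rho_\mu(0)=\{\zeta\in\ba\mid\E_\zeta[\,\cdot\,]\le\rho_\mu\}$ lies in $\mathcal L^f$, and by the Hahn--Banach support-function identity $\rho_\mu(X)=\max_{\zeta\in\partial\rho_\mu(0)}\E_\zeta[X]$, with the maximum attained on the weak$^*$-compact set $\partial\rho_\mu(0)$. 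The key observation is that on indicators $\rho_\mu$ sees only $P$-masses: since $Y\sim_P\ind_A$ forces $Y=\ind_{A'}$ with $P(A')=P(A)$, we have $\rho_\mu(\ind_A)=\sup\{\mu(A')\mid P(A')=P(A)\}=:r_\mu(P(A))$, depending on $A$ solely through $P(A)$.

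With these tools I would sandwich $\nu(A)$ between two manifestly $P$-symmetric quantities. For the upper estimate, use the Riesz--Kantorovich formula in the Dedekind complete lattice $\ba$: $\nu(A)=\sup\{\sum_i\mu_i(A_i)\}$ over finite $\Sigma$-partitions $A=\bigsqcup_i A_i$ and choices $\mu_i\in\mathcal L^f$. Bounding $\mu_i(A_i)\le r_{\mu_i}(P(A_i))$ gives $\nu(A)\le G(P(A))$, where $G(t):=\sup\{\sum_i r_{\mu_i}(t_i)\mid\sum_i t_i=t,\ \mu_i\in\mathcal L^f\}$ is defined purely through $P$-masses. For the matching lower estimate, fix data $t_i$ with $\sum_i t_i=P(A)$ and $\mu_i\in\mathcal L^f$ realising $G(P(A))$ up to $\eps$, split $A=\bigsqcup_i A_i$ with $P(A_i)=t_i$ using convex range, and pick $\zeta_i\in\partial\rho_{\mu_i}(0)\subseteq\mathcal L^f$ attaining $\zeta_i(A_i)=\rho_{\mu_i}(\ind_{A_i})=r_{\mu_i}(t_i)$. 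Since $\nu\ge\zeta_i$ and $\nu$ is additive, $\nu(A)=\sum_i\nu(A_i)\ge\sum_i\zeta_i(A_i)\ge G(P(A))-\eps$. Letting $\eps\downarrow 0$ yields $\nu(A)=G(P(A))$, which depends only on $P(A)$: this is exactly the required symmetry.

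The main obstacle lies entirely in the second paragraph: constructing the finitely additive rearrangement functional $\rho_\mu$ and verifying its sublinearity, continuity, and $P$-invariance. This functional replaces the Radon--Nikod\'ym derivative and the space $L^0_\P$ that powered Proposition~\ref{prop:constant} but are unavailable for a charge on a mere algebra. Strong nonatomicity (convex range) of $P$ is precisely what rescues the argument: it supplies the quantile/Hardy--Littlewood machinery needed to rewrite $\rho_\mu$ as a supremum of linear functionals (hence sublinear), and simultaneously the abundance of events used repeatedly to realise prescribed $P$-masses. I would also stress that, in contrast to Theorem~\ref{main2}, there is no countably additive reference measure to exploit, so the representation of $\rho_\mu$ must be carried out inside $\ba$ itself and cannot be reduced to $\ca$.
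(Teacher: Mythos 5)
Your argument is correct in substance, and while its first half coincides with the paper's, its endgame is genuinely different. Both proofs hinge on the rearrangement functional $\rho_\mu(X)=\sup_{Y\sim_P X}\E_\mu[Y]$ of Lemma~\ref{lem:sup} (which, note, is already stated for convex-ranged probability \emph{charges}, so no ``finitely additive analogue'' needs to be built) and on its dual representation, which converts $\mu(A)$ into the $P$-invariant quantity $\rho_\mu(\ind_A)=s_\mu(A)$ while keeping all dual elements inside $\mathcal L^f$. From there the paper takes a shortcut: it bounds $\sup_{\mu\in\mathcal L^f}\mu(A)$ by the $P$-invariant upper envelope $\sup_\mu s_\mu(A)$, compares with $\iota_{\mu^\star}$, and invokes Lemma~\ref{lem:FH} (the separation $\iota_\mu(A)<\mu(\Omega)P(A)<s_\mu(A)$ for $\mu$ linearly independent of $P$, imported from \cite{Basket1}) twice --- once to show $\mu^\star(\Omega)P$ is an upper bound of $\mathcal L^f$ and once to force linear dependence. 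You instead compute $\mu^\star(A)$ exactly via the Riesz--Kantorovich formula, show it is a function of $P(A)$ alone, and finish with Cauchy's functional equation. Your route avoids Lemma~\ref{lem:FH} entirely, replacing the main imported structural result by an elementary boundedness-plus-additivity argument; the price is the more delicate two-sided computation of the supremum over partitions, whereas the paper only ever needs upper bounds. Both are valid; yours is arguably more self-contained, the paper's is shorter given the toolbox it has already set up.

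Two points should be made explicit to close the argument. First, the construction of $\rho_\mu$ requires $\mu\ll P$ for every $\mu\in\mathcal L^f$; this is not automatic from the definition \eqref{eq:Lf} and must be verified by the null-set perturbation argument from the proof of Theorem~\ref{main1} (which transfers verbatim to charges), exactly as the paper does at the start of the proofs of Theorems~\ref{main2} and~\ref{main4}. It is also what justifies your claim that $\rho_\mu(\ind_A)$ only sees $\mu(A')$ for $A'=\{Y=1\}$, since $Y\sim_P\ind_A$ only forces $Y=\ind_{A'}$ off a $P$-null set. Second, the Riesz--Kantorovich formula \cite[Theorem 10.53]{Ali} is stated for finitely many charges; to pass to $\sup\mathcal L^f$ you should reduce to the upward-directed net of finite maxima and check that the supremum of a directed, order-bounded family in $\ba$ is computed setwise (an elementary verification: the setwise supremum of a directed family is itself additive and is the least upper bound). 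Neither point is a genuine obstruction, but both are load-bearing.
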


Example~\ref{ex1} and a slight variation of Example~\ref{ex2} also show that in this case none of the assumptions can be dropped. 

\subsection{Proofs of the results in this section}

We now prove the preceding main results, starting with Theorem~\ref{main1}. 
A first step is the following  crucial lemma, which may be of some independent interest.
For the sake of convenience we work with the well-known space $L^0_\P$ of equivalence classes of all real-valued random variables up to $\P$-a.s.\ equality. 
Equivalence classes themselves will, like random variables, be denoted by capital letters, and inequalities between them are assumed to hold $\P$-a.s.\ The notion of $\P$-invariance introduced in Sect.~\ref{sec:countable} immediately transfers to this setting. 

\begin{lemma}\label{lem:constant}
Suppose $\mathcal Z\subseteq L^0_\P$ is a $\P$-invariant set of equivalence classes of random variables that is upper bounded, i.e., there exists $Y\in L^0_\P$ such that $Z\le Y$ holds for all $Z\in\mathcal Z$.
Then $\sup\mathcal Z$ exists and is constant $\P$-a.s.\
In particular, if $Z\ge 0$ $\P$-a.s.\ for all $Z\in\mathcal Z$, then $\mathcal Z$ contains only $\P$-a.s.~bounded random variables.
\end{lemma}
\begin{proof}
Denote by $\mathcal U$ the set of all $U\in L^0_\P$ with the uniform distribution over $(0,1)$ under $\P$.
The existence of $Z^\star:=\sup\mathcal Z$ is guaranteed by \cite[Theorem A.37]{FoeSch}. 
Let $\var^\P$ be defined as in \eqref{eq:VaRintro}.
By \cite[Lemma A.32]{FoeSch}, 
\[\mathcal Z=\{\VaR^{\P}_{U}(Z)\colon Z\in\mathcal Z,\,U\in\mathcal U\},\]
and there is a particular $U^\star\in\mathcal U$ such that 
$Z^\star=\VaR^{\P}_{U^\star}(Z^\star)$.

As $1-U^\star\in\mathcal U$ as well, we have for all $Z\in\mathcal Z$ that $\VaR^{\P}_{1-U^\star}(Z)\in\mathcal Z$. 
Fix $s\in(0,1)$ and consider the  nontrivial event $A:=\{U^\star\le s\}$. 
As the function $(0,1)\ni\alpha\mapsto \VaR^{\P}_\alpha(X)$ associated with an arbitrary $X\in L^0_\P$ is  nondecreasing, we obtain
\begin{equation*}
\VaR^{\P}_{1-s}(Z)\ind_A\le \VaR^{\P}_{1-U^\star}(Z)\ind_A\le {Z^\star}\ind_A=\VaR^{\P}_{U^\star}(Z^\star)\ind_A\le \VaR^{\P}_s(Z^\star)\ind_A.
\end{equation*}
The latter estimate holds if and only if 
\begin{equation}\label{eq:first}\sup_{Z\in\mathcal Z}\VaR^{\P}_{1-s}(Z)\le \VaR^{\P}_s(Z^\star).\end{equation}
Taking the limit $s\downarrow 0$ in \eqref{eq:first}, we obtain $$a:=\sup_{Z\in\mathcal Z}\sup_{s\in(0,1)}\VaR^{\P}_{1-s}(Z)\le \inf_{s\in(0,1)}\VaR^{\P}_{s}(Z^\star)<\infty,$$ 
i.e., $a$ is a real constant.
Moreover, the preceding estimate yields for all $Z\in\mathcal Z$ and $U\in\mathcal U$ that
\[\VaR^{\P}_{U}(Z)\le a\le \var_{U^\star}^\P(Z^\star)=Z^\star.\]
i.e., the constant random variable $a$ is an upper bound of $\mathcal Z$ satisfying $a\le Z^\star$. By definition of a supremum, this is only possible if $a=Z^\star$.
\end{proof}

We can now proceed with the proof of Theorem~\ref{main1}.

\begin{proof}[Proof (of Theorem~\ref{main1})]
We only have to prove statement (i) as (ii)  follows by considering $-\ph$ instead of $\ph$. 

\textsc{Step 1:} 
If $\mathcal L$ is nonempty and $\sup\mathcal L$ exists in $\ca$, then it is a multiple of $\P$.
To prove this statement, we  first claim that each $\mu\in\mathcal L$ satisfies the absolute continuity relation $\mu\ll\P$; 
see Appendix~\ref{app:definitions} for the definition.  
Indeed, suppose an event $N\in\Sigma$ satisfies $\P[N]=0$ and let $X\in\CD$. For every $A\in\Sigma$ with $A\subseteq N$ and every $n\in\N$, $n\ind_A-n\ind_{N\setminus A}+X\ind_{N^c}\in\CD$. Hence, for arbitrary $\mu\in\mathcal L$, 
\begin{align}\label{eq:nullset}n|\mu|(N)+\E_{\mu}[X\ind_{N^c}]&=\sup\big\{\E_\mu[n\ind_A-n\ind_{N\setminus A}+X\ind_{N^c}]\colon A\in\Sigma,\,A\subseteq N\}\nonumber\\
&\le \ph(X)<\infty.\end{align}
Letting $n\to\infty$ implies that $|\mu|(N)=0$,
i.e.\ $\mu\ll \P$. 

Next, consider the lattice isomorphism   
$$\{\mu\in\ca\colon \mu\ll\P\}\to L^1_\P,\quad \mu\mapsto\tfrac{{\rm d}\mu}{{\rm d}\P}$$
produced by the Radon-Nikod\'ym derivative, which is a bijection between the two spaces and preserves order relations and operations.
For every $D=\tfrac{{\rm d}\mu}{{\rm d}\P}$, $\mu\in \mathcal L$, every $Z\sim_\P D$, and every $X\in\CD$, the Hardy-Littlewood bounds (\cite[Appendix A.3]{FoeSch}) deliver
$$\E_\P[ZX]\le \sup_{Z'\sim_\P D}\E_\P[Z' X]=\sup_{Y\sim_\P X}\E_\P[DY]=\sup_{Y\sim_\P X}\E_{\mu}[Y]\le\sup_{Y\sim_\P X}\ph(Y)= \ph(X).$$
Consequently, the signed measure defined by density $Z$ also lies in $\mathcal L$, and the set of densities $\mathcal Z:=\{\tfrac{{\rm d}\mu}{{\rm d}\P}\colon \mu\in\mathcal L\}$ is $\P$-invariant. 
Suppose now that $\mathcal L$ is bounded above in $\ca$.
By the lattice isomorphism property of the Radon-Nikod\'ym derivative, $\mathcal Z$ is also bounded above in $L^0_\P$ and its supremum is the density of $\sup\mathcal L$. 
Lemma~\ref{lem:constant} finally shows constancy of 
$\frac{{\rm d}\sup\mathcal L}{{\rm d}\P}$, or equivalently, $\sup\mathcal L$ being a multiple of $\P$.
    
\textsc{Step 2:} If $\mathcal L^f$ is nonempty, $\mathcal L$ is nonempty as well.  
As in \eqref{eq:nullset}, one observes for any $\mu\in\mathcal L^f$ that $\mu\ll\P$. 
By Lemma~\ref{lem:sup}, $\mu$ gives rise to a subadditive, positively homogeneous, continuous and $\P$-invariant functional 
\[\rho_{\mu}(X):=\sup_{Y\sim_\P X}\E_\mu[Y],\quad X\in B(\Sigma).\]
By \cite[Proposition 1.1]{Svindland}, there is a family $\mathcal M_\mu\subseteq \ca$ of signed measures $\zeta\ll\P$ such that 
\[\rho_\mu(X)=\sup_{\zeta\in\mathcal M_\mu}\E_\zeta[X],\quad X\in B(\Sigma).\]
In particular, for each $\zeta\in\mathcal M_\mu$ and $X\in\CD$,
    $\E_\zeta[X]\le \rho_\mu(X)\le \ph(X)$, meaning that 
\begin{equation}\label{eq:bigunion}\bigcup_{\mu\in\mathcal L^f}\mathcal M_\mu\subseteq \mathcal L.\end{equation}

\textsc{Step 3:} Existence of $\sup\mathcal L$ in $\ca$ implies that $\sup\mathcal L^f$ exists in $\ba$ and agrees with $\sup\mathcal L$.
By Step 1, $\sup\mathcal L=a\P$ for a suitable $a\in\R$. Given arbitrary $\mu\in\mathcal L^f$ and $A\in\Sigma$, 
\[\mu(A)\le \rho_\mu(\ind_A)=\sup_{\zeta\in\mathcal M_\mu}\zeta(A)\le a\P[A],\]
or that $\mu\le \sup\mathcal L$. We have used \eqref{eq:bigunion} in the last inequality. Hence, $\sup\mathcal L^f$ exists in $\ba$ and satisfies $\sup\mathcal L^f\le \sup\mathcal L$. 

\textsc{Step 4:} If $\sup\mathcal L^f$ exists in $\ba$, then $\sup\mathcal L$ exists in $\ca$ and satisfies $\sup\mathcal L\le\sup\mathcal L^f$.  
Indeed, as $\mathcal L\subseteq \mathcal L^f$, the existence of $\sup\mathcal L^f$ in $\ba$ implies that the set $\mathcal L$ is upper bounded in $\ba$.
    The latter is a Dedekind complete vector lattice (see Appendix~\ref{app:definitions}), thus $\sup\mathcal L$ exists in $\ba$ and satisfies $\sup\mathcal L\le \sup\mathcal L^f$. 
    As $\ca$ is a band in $\ba$ (see Appendix~\ref{app:definitions}), $\sup\mathcal L$ calculated in $\ba$ in fact lies in $\ca$. This means that the supremum of $\mathcal L$ also exists in the smaller space $\ca$.

    The proof is complete.
\end{proof}

We continue with the proof of Corollary~\ref{cor:remark}.

\begin{proof}[Proof (of Corollary~\ref{cor:remark})]
We aim to apply Theorem~\ref{main1}. 
To this effect, set $$\widetilde{\CD}:=\CD\cup\{X\in B(\Sigma)\colon X=-c~\P\text{-a.s.}\},$$ 
a $\P$-invariant subset of $B(\Sigma)$, and suppose that $\mathcal L$ (or equivalently, $\mathcal L^f$) is nonempty.  
Moreover, define $\widetilde\ph\colon\widetilde\CD\to\R$ by
$$\widetilde \ph(X)=\begin{cases}\ph(X)&\quad\text{if }\P[X\neq -c]>0,\\
-\ph(c)&\quad\text{if }X=-c~\P\text{-a.s.},\end{cases}$$
which satisfies
\begin{equation}\label{usedonce}\widetilde\ph|_{\mathcal D}\le \ph.\end{equation} 
This assertion is clear if $-c\notin\mathcal D$. 
Else, if $-c\in\mathcal D$ (or equivalently, $\CD=\widetilde\CD$), then $\mathcal L^f\neq\varnothing$ implies $\ph(-c)\ge -\ph(c)$, proving \eqref{usedonce}.

Denote the lower supporting sets of $\widetilde\ph$ by  $\widetilde{\mathcal L}$ and $\widetilde{\mathcal L}^f$.  
By \eqref{usedonce}, $\widetilde{\mathcal L}\subseteq\mathcal L$ and $\widetilde{\mathcal L}^f\subseteq\mathcal L^f$. 
Moreover, each $\mu\in\widetilde{\mathcal L}^f$ satisfies $\E_\mu[c]=\ph(c)$, i.e., $\widetilde{\mathcal L}\subseteq \mathcal L_c$ and $\widetilde{\mathcal L}^f\subseteq \mathcal L_c^f$. 
If $-c\notin\mathcal D$, the latter inclusion is an equality of sets. 
If $-c\in\CD$ and $\mu\in\mathcal L_c^f$ is arbitrary, then
\[\E_\mu[-c]=-\ph(c)\le \ph(-c).\]
This means that
$\widetilde{\mathcal L}=\mathcal L_c$ and $\widetilde{\mathcal L}^f=\mathcal L_c^f$, and it remains to apply Theorem~\ref{main1}. 
\end{proof}

\begin{proof}[Proof (of Theorem~\ref{main4})]
Arguing as in \eqref{eq:nullset}, each $\mu\in\mathcal L^f$ must satisfy $\mu\ll P$, giving rise to a functional $\rho_\mu$ on $B_s(\Sigma)$ by Lemma~\ref{lem:sup}. 
Also note that $\rho_\mu|_{\mathcal D}\le \ph$.
Defining the convex conjugate $\rho_\mu^*$ of $\rho_\mu$ on $\ba$ like in \eqref{def:conv conj}, 
we infer that
\[\mathcal L^f=\bigcup_{\mu\in\mathcal L^f}\{\rho_\mu^*<\infty\}.\]

Denote $\sup\mathcal L^f$ by $\mu^\star$ and fix $A\in\Sigma$ with $P[A]\in(0,1)$.
Using Lemma~\ref{lem:FH} and the notation therein, one obtains the following chain of estimates:
\begin{align*}
\mu^\star(\Omega)P[A]&\ge\iota_{\mu^\star}(A)\\
&\ge \inf_{P[B]=P[A]}\sup_{\mu\in\mathcal L^f}\mu(B)\\
&=\inf_{P[B]=P[A]}\sup_{\mu\in\mathcal L^f}\rho_\mu(\ind_B)\\
&=\sup_{\mu\in\mathcal L^f}\rho_\mu(\ind_A)\\
&\ge \sup_{\mu\in\mathcal L^f}\mu(A).
\end{align*}
This is sufficient to show that  $\mu^\star(\Omega)P$ is also an order upper bound of $\mathcal L^f$. 
By definition of the supremum, $\mu^\star(\Omega)P\ge \mu^\star$. 
This can only happen if $\mu^\star$ and $P$ are linearly dependent. 
\end{proof}

\section{Distortion riskmetrics}\label{sec:distortion}

An important class of law-invariant functionals are {\em distortion riskmetrics}, i.e., continuous functionals $\ph\colon B(\Sigma)\to\R$ that are comonotonic additive and $\P$-invariant for some probability measure $\P$.
The term ``distortion riskmetric'' is adopted from the recent work Wang et al.~\cite{Riskmetrics} to emphasise its possible lack of monotonicity, but the study of monotone distortion riskmetrics dates back at least to Quiggin \cite{Quiggin} and Yaari \cite{Yaari}  in economics.  
Comonotonic additive functionals play a central role not only in risk management, but also in economics and insurance; see, e.g., Acerbi \cite{Acerbi} Gilboa \cite{Gilboa}, Kou and Peng \cite{KP16}, Schmeider~\cite{Schmeidler}, and Wang et al.~\cite{Insurance}.  

This section is motivated by a simple observation.
Comonotonic additivity implies that a continuous functional is fully characterised by its values on the set
\[
\CI := \{\ind_A \colon A \in \Sigma\}
\]
of indicator functions. In other words, the restriction $\ph|_{\CI}$ contains all essential information. 
Equivalently, this restriction can be identified with a set function $v$, often called a {\em cooperative game}, obtained by $v(A):=\ph(\ind_A)$ for $A\in\Sigma$.
The associated distortion riskmetric is then recovered from $v$ via the Choquet integral; see Marinacci and Montrucchio~\cite{MariMont} and Schmeidler \cite{Schmeidler}.
From a risk measurement perspective, this reduction is useful because the risk of arbitrarily complex random losses can be derived from the risk profile of simple binary losses---those delivering a unit loss on a specific loss event and zero loss on its complement.
In the remainder of this section, we shall therefore study games instead of functionals.

Crucially, the set $\mathcal I$ is {\em not} $\P$-invariant because nontrivial null events can exist.
For instance, if we can pick $A,B\in\Sigma$ pairwise disjoint with $\P[B]=0$, then $\ind_A\sim_\P \ind_A+x\ind_B$ for every $x\in\R$,
but the latter random variable might not be an element of $\mathcal I$. 
Hence, the results from Sect.~\ref{sec:first} are not immediately applicable to $\ph|_{\CI}$.

This problem could be overcome by considering $\ph$ on the larger domain $$\widetilde \CI:=\{X\colon \P[X\in\{0,1\}]=1\},$$
which is invariant with respect to each $\Q$ equivalent to $\P$. 
However, taking seriously the concern raised in Sect.~\ref{sec:first} about {\em a priori} knowledge of the reference probability, one is led to ask whether the analysis carries over to the completely ``model-free'' setting of $\mathcal I$ in the important special case of distortion riskmetrics.
The aim of the present section is to provide an affirmative answer.

\subsection{Versions of the main results for games}

While the link to cooperative game theory will not play a substantial role,  we shall use game-theoretic terminology in order to be concise.
Consider the setting of Assumption~\ref{ass:structure}(B), i.e., $\Sigma$ is an algebra and $P$ is a convex-ranged probability charge. A set function $v\colon \Sigma\to\R$ is called:
\begin{itemize}
    \item {\em $P$-invariant} if $v(A)=v(B)$ whenever $A,B\in\Sigma$ satisfy $P[A]=P[B]$;
    \item {\em monotone} if $v(A)\le v(B)$ whenever $A,B\in\Sigma$ satisfy $A\subseteq B$;
    \item {\em continuous at $\varnothing$} if $\lim_{n\to\infty}v(A_n)=v(\varnothing)$ whenever $A_n\downarrow \varnothing$;
    \item a {\em game} if $v(\varnothing)=0$. 
\end{itemize}
If $v$ is $P$-invariant, then there exists a unique function $h\colon[0,1]\to\R$ such that $v=h\circ P$. Indeed, the convex range of $P$ ensures that for every $p\in[0,1]$ there exists $A_p\in\Sigma$ with $P(A_p)=p$, and the function $h$ is recovered via $h(p)=v(A_p)$.
The function $h$ is called a {\em distortion} of $P$, and if $v=h\circ P$ is a game, the corresponding distortion riskmetric is given by the Choquet integral with respect to $h\circ P$ in case the latter is continuous.

{\em Core} and {\em $\sigma$-core} of a game $v$ are the sets 
\begin{equation}\label{eq:core1}\core_v:=\{\mu\in\ba\colon \mu\ge v,\,\mu(\Omega)=v(\Omega)\}\quad\text{and}\quad\core^\sigma_v:=\core_v\cap\ca.\end{equation}
The  
{\em loose core} and the {\em loose $\sigma$-core} of $v$ are defined as 
\begin{equation}\label{eq:core2}
\LC_v:=\{\mu\in\ba\colon \mu\ge v\}\quad\text{and}\quad\LC^\sigma_v:=\LC_v\cap\ca,\end{equation}
omitting the normalisation constraint $\mu(\Omega)=v(\Omega)$.
The defining inequalities in \eqref{eq:core1} and \eqref{eq:core2} are understood setwise, and the defined sets may well be empty.
{\em Anticore} $\acore_v$, {\em $\sigma$-anticore} $\acore^\sigma_v$, {\em loose anticore} $\LA_v$, and {\em loose $\sigma$-anticore} $\LA^\sigma_v$ of $v$ are defined by replacing the condition $\mu\ge v$ in \eqref{eq:core1} and \eqref{eq:core2} by $\mu\le v$. 

The following proposition is an analogue of Theorem~\ref{main1} and Corollary~\ref{cor:remark} for games.

\begin{proposition}\label{main3}
Suppose $\Sigma$ is a $\sigma$-algebra, $\P$ is an atomless probability measure, and $v$ is a $\P$-invariant game. 
\begin{enumerate}
\item[(i)] For each choice $\mathcal Y\in\{\acore_v,\LA_v\}$, 
we have that $\sup\mathcal Y$ exists in $\ba$ if and only if $\sup\mathcal Y^\sigma$ exists in $\ca$.
In that case, there is a constant $a\in\R$ such that 
$$a\P=\sup\mathcal Y=\sup\mathcal Y^\sigma.$$
\item[(ii)] For each choice $\mathcal Y\in\{\core_v,\LC_v\}$, 
we have that $\inf\mathcal Y$ exists in $\ba$ if and only if $\inf\mathcal Y^\sigma$ exists in $\ca$.
In that case, there is a constant $b\in\R$ such that 
$$b\P=\inf\mathcal Y=\inf\mathcal Y^\sigma.$$
\end{enumerate}
\end{proposition}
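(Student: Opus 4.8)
The plan is to reduce Proposition~\ref{main3} to the already-established Theorems~\ref{main1} and~\ref{main2} by associating to the game $v$ a suitable $\P$-invariant functional on a $\P$-invariant domain, and then matching the relevant (anti)core with the corresponding supporting set. The key conceptual point is that, although the natural domain $\CI=\{\ind_A\mid A\in\Sigma\}$ on which a game ``lives'' is \emph{not} $\P$-invariant (as stressed in the text), the defining inequalities of the (anti)cores are stated setwise, i.e.\ only over indicators. So I first want to pass to the $\P$-invariant domain $\widetilde\CI=\{X\mid\P(X\in\{0,1\})=1\}$ and define $\ph(X):=v(A)$ for any $A$ with $X=\ind_A$ $\P$-a.s.; this is well defined precisely because $v$ is $\P$-invariant, and $\ph$ is $\P$-invariant by construction. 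The aim is then to show $\LA^\sigma_v$ coincides (or coincides up to what matters for the supremum) with the lower supporting set $\mathcal L$ of $\ph$ on $\widetilde\CI$, and likewise $\LC^\sigma_v$ with the upper supporting set $\mathcal U$, so that statement (i) for $\mathcal Y=\LA_v$ and statement (ii) for $\mathcal Y=\LC_v$ fall out of Theorem~\ref{main2}(i),(ii) respectively.

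First I would treat the loose cases, since they carry no normalisation constraint and therefore align most cleanly with $\mathcal L,\mathcal U$. For the anticore side, $\mu\in\LA^\sigma_v$ means $\mu(A)\le v(A)$ for all $A\in\Sigma$; I must check this is equivalent to $\E_\mu[X]\le\ph(X)$ for all $X\in\widetilde\CI$. The inequality over indicators gives one direction immediately, and for a general $X\in\widetilde\CI$ I would use that $X=\ind_A$ $\P$-a.s.\ together with $\mu\ll\P$ (which, as in the proof of Theorem~\ref{main1}, follows from membership in the supporting set) to reduce $\E_\mu[X]$ to $\mu(A)$. Symmetrically, $\LC^\sigma_v$ matches $\mathcal U$. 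Having identified the sets, Theorem~\ref{main2} directly yields both the $\ba$-versus-$\ca$ existence equivalence and the conclusion that the extremum is a scalar multiple $a\P$ (resp.\ $b\P$).

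For the tight (anti)core cases $\mathcal Y\in\{\acore_v,\core_v\}$, the extra normalisation constraint $\mu(\Omega)=v(\Omega)$ is exactly the kind of single linear equality handled by Corollary~\ref{cor:remark}: I would take $c$ to be the constant random variable $\ind_\Omega=1$, note $\ph(1)=v(\Omega)$, and observe that $\acore_v$ (resp.\ $\core_v$) corresponds to $\mathcal L_c$ (resp.\ $\mathcal U_c$) of $\ph$. The corollary then delivers the same scalar-multiple conclusion, and combining with the matching of loose cores above handles all four choices of $\mathcal Y$. A small bookkeeping point is that the existence-equivalence between $\ba$ and $\ca$ for the tight cores should be argued via the band property of $\ca$ in $\ba$ exactly as in the proof of Theorem~\ref{main2}, since Corollary~\ref{cor:remark} as stated gives the scalar-multiple shape but I want the two-space equivalence in the same breath.

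The main obstacle I anticipate is the careful verification that the setwise inequalities defining the (anti)cores genuinely translate into the integral inequalities over the enlarged domain $\widetilde\CI$ rather than just over $\CI$ — in other words, controlling the behaviour on $\P$-null sets. This is where the argument that every admissible $\mu$ satisfies $\mu\ll\P$ does the real work: without it, a charge could differ from $v$ on a null event in a way invisible to $v$ (which only sees $\P$-probabilities) yet could spoil the integral bound. I expect this absolute-continuity step, mirroring equation~\eqref{eq:replaced} in the proof of Theorem~\ref{main1}, to be the crux, after which the identification of the (anti)cores with the supporting sets and the appeal to Theorem~\ref{main2} and Corollary~\ref{cor:remark} is essentially formal.
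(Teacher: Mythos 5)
Your reduction works for the tight (anti)cores, and there it is essentially the paper's own route: the paper folds the normalisation into the domain by adjoining the constant $-1$ to $\mathcal D$ instead of invoking Corollary~\ref{cor:remark}, but that is cosmetic. The reason it works there is that $\mu\le v$ together with $\mu(\Omega)=v(\Omega)$ forces $\mu(A)=0$ for every measurable subset $A$ of a $\P$-null set (both $\mu(A)\le v(A)=0$ and $\mu(A)=\mu(\Omega)-\mu(A^c)\ge v(\Omega)-v(A^c)=0$ by $\P$-invariance of $v$), so every element of $\acore_v^\sigma$ is genuinely absolutely continuous and the identification with the supporting set over $\widetilde\CI$ goes through.

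The gap is in the loose case. Membership in $\LA_v^\sigma$ only yields $\mu^+\ll\P$, not $\mu\ll\P$: the defining constraints $\mu(A)\le v(A)$ range over honest indicators, so on a null set $N$ they pin down $\sup_{A\subseteq N}\mu(A)\le 0$ but impose no lower bound on $\mu$ restricted to $N$. For instance, $\mu=\P-\delta$ with $\delta$ a point mass on a $\P$-null singleton lies in $\LA_v^\sigma$ whenever $v\ge\P$ setwise (e.g.\ the ES capacity), yet $\E_\mu[\ind_A-n\ind_N]=\mu(A)+n\to\infty$ while $\ind_A-n\ind_N\in\widetilde\CI$, so $\mu\notin\mathcal L$. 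Hence the claimed equivalence $\LA_v^\sigma=\mathcal L$ is false, and your justification---that $\mu\ll\P$ ``follows from membership in the supporting set''---is circular: absolute continuity holds for members of $\mathcal L$, which is exactly what you are trying to establish, and the analogue of \eqref{eq:replaced} is unavailable because the test functions $n\ind_A-n\ind_{N\setminus A}+\cdots$ are not indicators. The statement survives because the two sets have the same upper bounds: Lebesgue-decomposing $\mu\in\LA_v^\sigma$ and discarding the (necessarily nonpositive) singular part concentrated on a null set $N_0$ produces $\tilde\mu\ge\mu$ with $\tilde\mu\ll\P$ and $\tilde\mu(A)=\mu(A\setminus N_0)\le v(A\setminus N_0)=v(A)$, so $\tilde\mu\in\mathcal L$---but this step, or a substitute for it, is missing from your argument. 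The paper instead routes the loose case through Lemma~\ref{lem:concave} (which needs only $\mu^+\ll\P$), the exact games $s_\mu$ and their Choquet integrals, Svindland's representation, and Proposition~\ref{cor:main1}; some version of this extra work is unavoidable.
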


We also have a direct counterpart to Theorem~\ref{main4} for the more general case of a finitely additive reference probability: 

\begin{proposition}\label{cor:main1}
Let $P$ be a convex-ranged probability charge on an algebra $\Sigma$, and let $v\colon \Sigma\to\R$ be a $P$-invariant game.
\begin{enumerate}
\item[(i)] Let $\CY\in\{\acore_v,\LA_v\}$ and suppose 
$\sup\CY$ exists. Then there is a constant $a\in\R$ such that 
\[\sup\CY=a P.\]
\item[(ii)] Let $\CY\in\{\core_v,\LC_v\}$ and suppose 
$\inf\CY$ exists. Then there is a constant $b\in\R$ such that 
\[\inf\CY=b P.\]
\end{enumerate}
\end{proposition}

We prove Proposition~\ref{cor:main1} first, as the statement will be used in the proof of Proposition~\ref{main3}.

\begin{proof}[Proof (of Proposition~\ref{cor:main1})]
    Again, we only prove statement (i) and focus on the loose anticore $\LA_v$ first. Fix $\mu\in\LA_v$ and let $N\in\Sigma$ with $P[N]=0$. \cite[Theorem 10.53]{Ali} shows for the positive part $\mu^+=\mu\vee 0$ of $\mu$ that
    $$\mu^+(N)=\sup\{\mu(A)\colon A\in 
    \Sigma,\,A\subseteq N\}\le \sup\{v(A)\colon A\in 
    \Sigma,\,A\subseteq N\}=v(\varnothing)=0;$$
    i.e., $\mu^+\ll P$ holds. This allows to invoke Lemma~\ref{lem:concave} to obtain the $P$-invariant exact game of bounded variation $s_\mu$. In particular, 
    \begin{equation}\label{eq:union}\LA_v=\bigcup_{\mu\in\LA_v}\LA_{s_\mu}.\end{equation}
    Denote $\sup\LA_v$ by $\mu^\star$.
    Let $(\mathcal F_\alpha)$ be a net of finite subsets of $\LA_v$ such that $\mu_\alpha:=\max_{\mu\in \CF_\alpha}\mu$ increases to $\mu^\star$ in order. 
    By \cite[Lemma 8.15]{Ali}, $\mu_\alpha^+\uparrow (\mu^\star)^+$ in order, as well. 
    As each $\mu_\alpha$ satisfies $\mu_\alpha^+\ll P$, we also have $(\mu^\star)^+\ll P$. 
    
    Using Lemma~\ref{lem:concave} for the first inequality and \eqref{eq:union} for the first equality, one obtains for all $A\in\Sigma$ the following chain of estimates:
\begin{align*}
\mu^\star(\Omega)P[A]&\ge\iota_{\mu^\star}(A)\\
&\ge \inf_{P[B]=P[A]}\sup_{\mu\in\LA_v}\mu(B)\\
&=\inf_{P[B]=P[A]}\sup_{\mu\in\LA_v}s_\mu(B)\\
&=\sup_{\mu\in\LA_v}s_\mu(A)\\
&\ge \sup_{\mu\in\LA_v}\mu(A).
\end{align*}
In summary, $\mu^\star(\Omega)P$ is also an upper bound of $\LA_v$. 
By definition of the supremum, $\mu^\star(\Omega)P\ge \mu^\star$, which can only happen if $\mu^\star=\mu^\star(\Omega)P$. 
The case $\CY=\acore_v$ is treated analogously.
\end{proof}

We continue with the proof of Proposition~\ref{main3}.

\begin{proof}[Proof (of Proposition~\ref{main3})]
    In the case of $\acore_v$ and $\acore^\sigma_v$, each $\mu$ in these sets will satisfy $\mu\ll\P$ by construction. 
    The proof in these cases thus works by applying Theorem~\ref{main1} to the set
    \[
    \mathcal D=\{X\in B(\Sigma)\colon X=\ind_A~\P\text{-a.s.\ for some }A\in\Sigma\text{ or }X=-1~\P\text{-a.s.}\}.
    \]
    Hence, we shall focus on the ($\sigma$-)loose anticore. 

    Suppose we can pick $\mu\in\LA_v$. Necessarily, $\mu^+\ll \P$ and $s_\mu$ defined in the context of Lemma~\ref{lem:concave} is a bounded submodular $\P$-invariant game. 
    The associated signed Choquet integral $\ph_\mu$ is sublinear, $\P$-law invariant and continuous; see Wang et al.~\cite{WWW}. 
By \cite[Proposition 1.1]{Svindland}, there must be a subset $\mathcal M_\mu\subseteq\ca$ such that $\zeta\ll\P$ for all $\zeta\in\mathcal M_\mu$, and 
$$\ph_\mu(X)=\sup_{\zeta\in\mathcal M_\mu}\E_\zeta[X],\quad X\in B(\Sigma).$$
Hence, 
\[v(A)\ge s_\mu(A)=\ph_\mu(\ind_A)=\sup_{\zeta\in\mathcal M_\mu}\zeta(A),\]
meaning that 
\begin{equation}\label{eq:union2}
\bigcup_{\mu\in\LA_v}\mathcal M_\mu\subseteq \LA_v^\sigma,\end{equation}
the latter set being nonempty as well. Hence, if $\sup\LA_v$ exists in $\ba$, we can argue like in Step 4 of the proof of Theorem~\ref{main1} to show the existence of $\sup\LA_v^\sigma$ in $\ca$ and that $\sup\LA_v^\sigma\le\sup\LA_v$.

Suppose now that $\sup\LA_v^\sigma$ exists in $\ca$. 
For all $\nu\in\LA_v$ and all $A\in\Sigma$, we infer from \eqref{eq:union2} that
$$\nu(A)\le \sup_{\mu\in\LA_v}s_\mu(A)=\sup_{\mu\in\LA_v}\sup_{\zeta\in\mathcal M_\mu}\zeta(A)\le \sup_{\mu\in\LA_v^\sigma}\mu(A)\le (\sup\LA_v^\sigma)(A).$$
Hence, $\LA_v$ is upper bounded by $\sup\LA_v^\sigma$, meaning that $\sup\LA_v\le\sup\LA_v^\sigma$. 

The proof concludes with Proposition~\ref{cor:main1}.
\end{proof}

Several caveats are worth noting. The
 first concerns the observation that the sets $\acore_v^{(\sigma)}$, $\LA_v^{(\sigma)}$, $\core_v^{(\sigma)}$, and $\LC_v^{(\sigma)}$ are not always equally suited to elicit $\P$. 
 In fact, the existence of a nontrivial supremum/infimum of the loose ($\sigma$-)anticore/($\sigma$-)core has strong consequences for the regularity of $v$ under mild conditions. 

\begin{proposition}\label{prop:exist}
    Suppose $\Sigma$ is an algebra, $P$ is a convex-ranged probability charge, and $v=h\circ P$ is a nonnegative $P$-invariant game satisfying
    \begin{equation}\label{eq:inf}\inf_{0<x\le 1}h(x)>0.\end{equation}
    Then 
    $\core_v=\varnothing$
and $\sup\LA_v$ does not exist.
\end{proposition}

\begin{proof}
For the assertion $\core_v=\varnothing$, suppose that $\core_v$ is nonempty. 
Following the proof of \cite[Lemma 3]{Basket1}, we get $h(\frac 1 n)\le \frac{h(1)}n$ for all $n\ge 2$. 
This contradicts \eqref{eq:inf}.

Now assume towards a contradiction that $\sup\LA_v$ exists. By Proposition~\ref{cor:main1}, there is some $a\in\R$ such that $\sup\LA_v=aP$.
By \eqref{eq:inf}, we find $\delta>0$ such that $\delta\ind_{(0,1]}\le h$.
For each $A\in\Sigma$ with $P[A]>0$, we thus have $\delta P^A\in \LA_v$. 
Hence, 
$a P[A]\ge \delta$ must hold for all such $A\in\Sigma$. Letting $P[A]\downarrow 0$ yields a contradiction.  
\end{proof}

\begin{remark}\,
\begin{enumerate}
    \item[(a)] Suppose that the game $v=h\circ \P$ is induced by a distortion riskmetric. In this case, condition \eqref{eq:inf} reflects a very conservative risk management approach in which speculative losses occurring with very small probability carry nontrivial marginal risk. 
    In particular, $v$ often fails to be continuous at $\varnothing$. 
    This is typically not reasonable, and in such a context, one would expect \eqref{eq:inf} to fail.
    \item[(b)] Given a nonnegative $P$-invariant game $v=h\circ P$, reasoning symmetrically to the previous proposition yields that the condition 
    $\sup_{0\le x<1}h(x)<h(1)$ implies $\acore_v=\varnothing$.
    In the case of monotone games, this is tantamount to discontinuity of $h$ at $1$.
    Consequently, one should try eliciting $P$ in these cases by focusing on $\inf\LC_v$ or, if the reference probability is countably additive, on $\inf\LC_v^\sigma$.
\end{enumerate}
\end{remark}

A second caveat concerns the computation of suprema and infima in Theorem~\ref{main1}, as well as Proposition~\ref{main3}. 
These are taken in the spaces $\ca$ or $\ba$, the latter being the norm dual space of $B(\Sigma)$ and $B_s(\Sigma)$, the spaces of bounded and simple random variables, respectively.
Equivalently, infima and suprema can also be taken in the space $\ba_\P$ of all $\mu\in\ba$ with $\mu\ll \P$.
$\ba_\P$ is the dual space of $L^\infty_\P$, the space of equivalence classes of bounded random variables up to $\P$-a.s.\ equality. 

Do infima or suprema in one of the primal function spaces $B(\Sigma)$, $B_s(\Sigma)$, or $L^\infty_\P$, also contain sufficient relevant information about the reference probability to elicit the latter?
The order properties of $L^\infty_\P$ are particularly appealing. 
This space is {\em super Dedekind complete}, meaning every upper bounded subset has a supremum that is attained by a {\em countable} subset of the original set (\cite[Theorem A.37]{FoeSch}). The spaces $B_s(\Sigma)$ and $B(\Sigma)$, in contrast, do not admit suprema for every upper bounded subset.
Our natural question has a negative answer, however, see Lemma~\ref{lem:constant}.
The supremum $\sup\mathcal Z$ of a $\P$-invariant set $\mathcal Z\subseteq L^0_\P$ does not depend on the reference probability whatsoever and therefore contains no relevant information about the latter.

\subsection{The elicitation procedure as sandwich theorem}

The underlying mathematical structure of our elicitation results may still appear opaque to the reader. 
This subsection is therefore devoted to a more detailed examination in the special case of sub-/superadditive games.
We will see that our results can then be cast as sandwich theorems---a type of separation result---in the spirit of Kindler~\cite{Kindler}.
We also refer to the more operational approach to sandwich theorems of Amarante~\cite{Sandwich}, formulated in a very similar setting.

A set function $v\colon\Sigma\to[-\infty,\infty)$ is {\em superadditive} if for all pairwise disjoint events $A,B\in\Sigma$, 
\[v(A\cup B)\ge v(A)+v(B).\]
A set function $v\colon\Sigma\to(-\infty,\infty]$ is subadditive if $-v$ is superadditive. 
For a nonempty set $\mathcal R\subseteq \ba$ of signed charges, we define the {\em lower envelope} 
$\lo(\mathcal R):=\inf_{\mu\in\mathcal R}\mu(\cdot)$, which 
is superadditive, and the {\em upper envelope} $\up(\mathcal R):=\sup_{\mu\in\mathcal R}\mu(\cdot)$, which is subadditive. 
In the situation of Proposition~\ref{cor:main1}, nonemptiness of $\LA_v$ implies that $\up(\LA_v)$ is in fact a subadditive game and that $\up(\LA_v)\le v$ holds setwise.
Moreover, if $\sup\LA_v$ exists in $\ba$, then also 
\begin{equation}\label{eq:sandwich1}\up(\LA_v)\le \sup\LA_v
\end{equation}
holds setwise.
While we have already remarked in Sect.~\ref{sec:first} that there need not be a setwise order relationship between $v$ and $\sup\LA_v$, suppose for the moment that, in addition to \eqref{eq:sandwich1}, we have 
\begin{equation}\label{eq:sandwich2}\up(\LA_v)\le \sup\LA_v\le v.\end{equation}
In that case, the linear object $\sup\LA_v\in\ba$ is sandwiched between the subadditive game $\up(\LA_v)$ and the game $v$, thus providing a linear separation between the two. 

Sandwich theorems give sufficient conditions for two set functions $v\le w$ to admit
a linear separation as in \eqref{eq:sandwich2}, i.e., for the existence of
$\mu\in\ba$ with $v\le \mu\le w$.
Proposition~\ref{prop:super}, the version of our elicitation result for superadditive games, can be viewed as a sandwich theorem. It characterises $\sup\LA_v$ as a sandwiched functional and leverages the special focus on $P$-invariance to obtain a more precise separation than, e.g., \cite[Proposition 3]{Sandwich}.
Not least, the preceding discussion immediately transfers to $\acore_v$ as well as $\core_v$ and $\LC_v$---replacing supremum by infimum and flipping inequalities.

\begin{proposition}\label{prop:super}
Let $P$ be a convex-ranged probability charge on an algebra $\Sigma$ and suppose that $v$ is a $P$-invariant superadditive game. Then the following statements are equivalent:
\begin{enumerate}
    \item[(i)] $\LA_v\neq\varnothing$.
    \item[(ii)] $\sup\LA_v$ exists in $\ba$.
    \item[(iii)] $\{a\in\R\colon aP\le v\}\neq\varnothing$.
    \item[(iv)] $a^\star:=\sup\{a\in\R\colon a P\le v\}$ is finite.
\end{enumerate}
In this case, 
\begin{equation}\label{eq:implication1}\LA_v=\{\mu\in\ba\colon \mu\leq a^\star P\}\qquad\text{and}\qquad\sup\LA_v=a^\star P.\end{equation}
{\em A fortiori}, $a^\star P$ is the unique element of $\ba$ sandwiched between the subadditive game $\up(\LA_v)$ and $v$ itself. 
\end{proposition}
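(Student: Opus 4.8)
The plan is to reduce everything to the single real parameter $a^\star$ and to the elementary fact that, because $P$ is convex-ranged, a $P$-invariant game is of the form $v=h\circ P$ with $h(0)=0$, and superadditivity of $v$ translates into superadditivity of $h$ on $[0,1]$. First I would record that $aP\le v$ holds setwise if and only if $a\le h(x)/x$ for every $x\in(0,1]$, so that $\{a\in\R\mid aP\le v\}=(-\infty,a^\star]$ with $a^\star=\inf_{x\in(0,1]}h(x)/x\in[-\infty,h(1)]$. This makes (iii) $\Leftrightarrow$ (iv) immediate and shows that, when $a^\star\in\R$, the charge $a^\star P$ itself lies in $\LA_v$; in particular (iv) $\Rightarrow$ (i). The implication (ii) $\Rightarrow$ (i) is trivial, since a supremum in $\ba$ presupposes a nonempty, upper-bounded set.

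Next I would prove (i) $\Rightarrow$ (iv). Fixing $\mu\in\LA_v$ with $\|\mu\|=M$, superadditivity of $h$ together with $\mu\le v$ first yields two-sided bounds $-M\le h\le h(1)+M=:M'$ (the lower bound from $h(P(A))\ge\mu(A)\ge-M$, the upper from $h(x)+h(1-x)\le h(1)$). Using convex range I would then partition $\Omega$ into $n=\lfloor1/x\rfloor$ disjoint events of measure $x$ plus a remainder $B$, and sum the inequalities $\mu(A_i)\le h(x)$ to get $n\,h(x)\ge\mu(\Omega)-\mu(B)\ge\mu(\Omega)-M'$. Dividing by $x$ and noting $nx>1/2$ for all $x\in(0,1]$ produces a lower bound on $h(x)/x$ that is uniform in $x$; hence $a^\star>-\infty$, i.e.\ $a^\star\in\R$.

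The heart of the argument is (iv) $\Rightarrow$ (ii) together with the explicit formula, for which I would establish
\[
\LA_v=\{\mu\in\ba\mid\mu\le a^\star P\}.
\]
The inclusion ``$\supseteq$'' is immediate from $a^\star P\le v$. For ``$\subseteq$'' I would fix $\mu\in\LA_v$ and an event $A$ with $P(A)=x$, choose $y^\star\in(0,1]$ with $h(y^\star)/y^\star<a^\star+\eps$, and exploit the scaling inequality $h(y^\star/k)/(y^\star/k)\le h(y^\star)/y^\star$, valid for every $k$ by superadditivity. Partitioning $A$ into $m=\lfloor xk/y^\star\rfloor$ pieces of measure $y^\star/k$ and one remainder of measure $r<y^\star/k$ gives $\mu(A)\le m\,h(y^\star/k)+h(r)=(x-r)\tfrac{h(y^\star/k)}{y^\star/k}+h(r)$. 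Letting $k\to\infty$, the first term has $\limsup$ at most $x(a^\star+\eps)$, while $\limsup_k h(r)\le0$ (bounding $h(r)\le h(y^\star/k)-h(y^\star/k-r)$, using $h(y^\star/k)\to0$ and $h(y^\star/k-r)\ge a^\star(y^\star/k-r)\to0$). Since $\eps>0$ is arbitrary, $\mu(A)\le a^\star P(A)$. With the identity in hand, $a^\star P$ is the greatest element of $\LA_v$, so $\sup\LA_v=a^\star P$ exists, closing the cycle of equivalences; Proposition~\ref{cor:main1} could alternatively be invoked to know a priori that the supremum is a multiple of $P$, but the constant still has to be pinned down. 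I expect this double limit—separating the principal pieces from the shrinking remainder and controlling $h(r)$—to be the main obstacle, because the naive partition into $n$ equal pieces only shows $\mu(A)\le x\inf_n h(x/n)/(x/n)$, and for a merely superadditive $h$ this infimum need not reach $a^\star$ along the single sequence $x/n$.

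Finally, for the sandwich statement I would observe that, $\LA_v$ being the order ideal below $a^\star P$, its upper envelope is $\up(\LA_v)=a^\star P$. Any $\mu\in\ba$ with $\up(\LA_v)\le\mu\le v$ then satisfies $\mu\in\LA_v$, hence $\mu\le a^\star P$, while simultaneously $a^\star P=\up(\LA_v)\le\mu$; therefore $\mu=a^\star P$, which gives uniqueness of the sandwiched linear functional.
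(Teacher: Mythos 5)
Your proof is correct, and it takes a genuinely different route from the paper's. The paper proves the cycle by abstract lattice arguments: for (i)$\Rightarrow$(iii) it invokes Lemma~\ref{lem:concave} to sandwich $\mu(\Omega)P\le s_\mu\le v$; for (iv)$\Rightarrow$(ii) it shows $\LA_v$ is closed under $\vee$ (using superadditivity of $v$ and the formula for $\mu\vee\nu$ in $\ba$), checks norm-boundedness, and appeals to monotone completeness of $\ba$ to get existence of the supremum; and only then does it identify the supremum as $a^\star P$ via Proposition~\ref{cor:main1} and the upward-directedness of $\LA_v$. You instead reduce everything to the scalar function $h$ with $v=h\circ P$ and prove the set identity $\LA_v=\{\mu\in\ba\mid\mu\le a^\star P\}$ head-on by a partition argument, from which existence of the supremum (as the greatest element $a^\star P$) is immediate. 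Your key technical contribution is the double limit in (iv)$\Rightarrow$(ii): first locating a near-optimal scale $y^\star$ with $h(y^\star)/y^\star<a^\star+\eps$, then exploiting the superadditive scaling $h(y^\star/k)/(y^\star/k)\le h(y^\star)/y^\star$ and controlling the remainder $h(r)$ via $h(r)\le h(y^\star/k)-h(y^\star/k-r)$ together with the global bound $h(z)\ge a^\star z$; your remark that the naive equipartition of $A$ only yields $\inf_n h(x/n)/(x/n)$, which need not reach $a^\star$, correctly identifies why this detour is necessary. What your approach buys is self-containedness (no appeal to Lemma~\ref{lem:concave}, Proposition~\ref{cor:main1}, or monotone completeness) and a constructive identification of $\LA_v$ as a principal order ideal; what the paper's approach buys is that the directedness-plus-completeness mechanism generalises beyond the superadditive case and recycles machinery already established for the other results. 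Both arguments are sound; only minor presentational points remain in yours (e.g.\ the bookkeeping constant $M'$ in (i)$\Rightarrow$(iv) should bound $|\mu(B)|$ rather than $h$, but the uniform bound $nx\ge\tfrac12$ carries the argument regardless).
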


\begin{proof}
We first prove the equivalence of statements (i)--(iv).

(i) implies (iii): Let $\mu\in\LA_v$. Either $\mu$ itself is a multiple of $P$, or the associated game $s_\mu$ satisfies $\mu(\Omega)P\le s_\mu\le v$, i.e., $\mu(\Omega)\in\{a\in\R\colon aP\le v\}$ is nonempty. 

(iii) implies (iv): Whenever $b>v(\Omega)$, $bP[\Omega]=b>v(\Omega)$. Thus, 
$$\{a\in\R\colon aP\le v\}\subseteq(-\infty,v(\Omega)]$$
and $\sup\{a\in\R\colon aP\le v\}$ is a real number. 

(iv) implies (ii): First, let $\mu,\nu\in \LA_v$. By~\cite[Theorem 10.53]{Ali}, their maximum $\mu\vee\nu$ can be computed at $A\in\Sigma$ as
\begin{align*}
(\mu\vee\nu)(A)&=\sup\{\mu(B)+\nu(A\setminus B)\colon B\in\Sigma,\,A\supseteq B\}\\
&\le \sup\{v(B)+v(A\setminus B)\colon B\in\Sigma,\,A\supseteq B\}\\
&\le v(A).   
\end{align*}
In the first inequality we have used that $\mu,\nu\le v$, in the second that $v$ is superadditive. Consequently, $\mu\vee\nu\in\LA_v$ again.

Select $\mu\in\LA_v$ arbitrarily. 
The set $\mathcal S:=\{\nu\in \LA_v\colon \nu\ge \mu\}$ forms a nondecreasing net in $\ba$. Moreover, for all $\nu\in\mathcal S$, 
the total variation norm $TV(\nu^+)$ of $\nu^+$ satisfies
$$TV(\nu^+)=\nu^+(\Omega)=\sup_{A\in\Sigma}\nu(A)\le\sup_{A\in\Sigma}v(A).$$
Using superadditivity of $v$ and statement (iv), 
\begin{align*}
\sup_{A \in \Sigma} v(A) 
&\le \sup_{A \in \Sigma} \big\{ v(\Omega) - v(A) \big\} = v(\Omega) - \inf_{A \in \Sigma} v(A) \\
&\le v(\Omega) - \inf_{A \in \Sigma} a^\star P[A] \le v(\Omega) + |a^\star|.
\end{align*}
Hence, the set $\mathcal S$ is norm bounded in $\ba$. As $\ba$ is {\em monotonically complete} in the sense of \cite[Definition 2.4.18]{MeyNie} according to \cite[Proposition 2.4.19(ii)]{MeyNie}, $\sup\mathcal S=\sup\LA_v$ exists in $\ba$. 

(ii) implies (i) by definition. 

In order to verify \eqref{eq:implication1}, suppose that $\sup\LA_v$ exists. 
Proposition~\ref{cor:main1} proves that $\sup\LA_v=cP$ for a suitable $c\in\R$. 
The property of $\LA_v$ being directed upwards shows for all $A\in\Sigma$ that
\begin{equation}\label{eq:c}
c P[A]=\sup_{\mu\in\LA_v}\mu(A)\le v(A),\end{equation}
i.e.\ $c\le a^\star$. In view of $a^\star P\in \LA_v$, we have $c=a^\star$.
Another consequence of \eqref{eq:c} is that $\LA_v=\{\mu\in\ba\colon \mu\le a^\star P\}$.
This suffices for the sandwich property of $a^\star P$. 
\end{proof}

Corollary~\ref{cor:sub} is the immediate mirror image of Proposition~\ref{prop:super}. 

\begin{corollary}\label{cor:sub}
Let $P$ be a convex-ranged probability charge on an algebra $\Sigma$ and suppose that $v$ is a $P$-invariant subadditive game. Then the following statements are equivalent:
\begin{enumerate}
    \item[(i)] $\LC_v\neq\varnothing$.
    \item[(ii)] $\inf\LC_v$ exists.
    \item[(iii)] $\{b\in\R\colon bP\ge v\}\neq\varnothing$.
    \item[(iv)] $b^\star:=\inf\{b\in\R\colon b P\ge v\}$ is a real number.
\end{enumerate}
In this case, 
$$\LC_v=\{\mu\in\ba\colon \mu\ge b^\star P\}\qquad\text{and}\qquad\inf\LC_v=b^\star P.$$ 
{\em A fortiori}, $-b^\star P$ is the unique element of $\ba$ sandwiched between the subadditive game $-\lo(\LC_v)$ and the superadditive game $-v$. 
\end{corollary}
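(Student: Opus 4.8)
\textit{Strategy.} The plan is to deduce the Corollary from Proposition~\ref{prop:super} by reflection, i.e.\ by applying the latter to the reflected game $w:=-v$ and transporting every assertion through the order-reversing linear bijection
$$\Phi\colon \ba\to\ba,\qquad \mu\mapsto -\mu.$$
First I would record the elementary bookkeeping: $w$ is again a $P$-invariant superadditive game (this is the superadditivity of $-v$ already invoked in the sandwich clause of the statement), and $\Phi$ interchanges upper and lower bounds, hence maps suprema to infima. Moreover, $\mu\le w=-v$ holds setwise if and only if $-\mu\ge v$, so that $\Phi$ restricts to a bijection
$$\Phi(\LA_w)=\LC_v,\qquad \Phi^{-1}(\LC_v)=\LA_w.$$

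\textit{Transferring the equivalences.} Second I would translate the four equivalent conditions of Proposition~\ref{prop:super}, applied to $w$, into those of the Corollary. Nonemptiness passes through $\Phi$ immediately, giving (i). Since $\Phi$ reverses order, $\sup\LA_w$ exists in $\ba$ if and only if $\inf\LC_v$ does, and in that case $\inf\LC_v=-\sup\LA_w$; this yields (ii). For (iii) and (iv) I would observe that $aP\le w=-v$ is equivalent to $(-a)P\ge v$, so that under the substitution $b=-a$,
$$\{a\in\R\mid aP\le w\}=-\{b\in\R\mid bP\ge v\}.$$
Hence the two sets are simultaneously (non)empty, and $a^\star_w:=\sup\{a\mid aP\le w\}$ equals $-b^\star$ with $b^\star=\inf\{b\mid bP\ge v\}$; in particular $a^\star_w$ is a real number precisely when $b^\star$ is.

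\textit{Transferring the conclusions and the sandwich.} Third I would push the explicit conclusions through $\Phi$. From $\LA_w=\{\mu\in\ba\mid\mu\le a^\star_w P\}$ and $\sup\LA_w=a^\star_w P=-b^\star P$, applying $\Phi$ and using $-a^\star_w=b^\star$, I obtain
$$\LC_v=\{\nu\in\ba\mid\nu\ge b^\star P\}\qquad\text{and}\qquad \inf\LC_v=-\sup\LA_w=b^\star P.$$
For the sandwich assertion I would note that the upper envelope transforms as $\up(\LA_w)=-\lo(\LC_v)$, so that the chain $\up(\LA_w)\le a^\star_w P\le w$ supplied by Proposition~\ref{prop:super} becomes, after multiplication by $-1$, exactly $-\lo(\LC_v)\le -b^\star P\le -v$; uniqueness of the sandwiched element is inherited because $\Phi$ is a bijection.

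\textit{Main obstacle.} I do not anticipate a genuine difficulty, since all the content already resides in Proposition~\ref{prop:super}. The only point requiring care is the sign bookkeeping---keeping straight that reflection simultaneously swaps $\le$ for $\ge$, $\sup$ for $\inf$, $\up$ for $\lo$, and $a^\star_w$ for $-b^\star$---together with verifying that $\Phi$ is an order-reversing \emph{linear} bijection on the \emph{vector lattice} $\ba$, so that order-theoretic suprema and infima (and not merely pointwise ones) are genuinely interchanged.
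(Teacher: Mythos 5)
Your proposal is correct and is exactly the argument the paper intends: the paper gives no separate proof, declaring Corollary~\ref{cor:sub} "the immediate mirror image of Proposition~\ref{prop:super}", i.e.\ precisely the reflection $v\mapsto -v$, $\mu\mapsto-\mu$ that you carry out.

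One point deserves explicit mention rather than the parenthetical gloss you give it. As printed, the corollary hypothesises that $v$ is \emph{superadditive}; under that hypothesis $w=-v$ is \emph{subadditive}, and Proposition~\ref{prop:super} does not apply to $w$. Your phrase ``$w$ is again a $P$-invariant superadditive game'' is therefore literally false under the stated hypothesis — negation flips super- into subadditivity. The resolution is that the hypothesis contains a typo and should read ``subadditive'': this is confirmed both by the corollary's own sandwich clause, which calls $-v$ a superadditive game, and by the way the corollary is invoked in Section~\ref{ex:entr}, where \emph{subadditivity} of the entropic capacity $v_\alpha$ is cited as the reason $\inf\LC_{v_\alpha}$ can be computed via Corollary~\ref{cor:sub}. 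With the hypothesis so corrected, every step of your transfer — $\Phi(\LA_w)=\LC_v$, $a^\star_w=-b^\star$, $\up(\LA_w)=-\lo(\LC_v)$, and the inherited uniqueness of the sandwiched element — goes through as you describe.
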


\section{Examples}
\label{sec:ex}

We now illustrate our results with several prominent examples from the literature on risk measures. The entropic risk measures, Expected Shortfall, and Value-at-Risk---discussed in detail in~\cite[Chapter 4]{FoeSch} and~\cite[Chapter 2.3]{MFE15}---are arguably the three most popular one-parameter families.
Throughout, we work in the framework of Assumption~\ref{ass:structure}(B).

\subsection{Entropic risk measure}\label{ex:entr}

Consider the class of entropic risk measures $\Entr_\alpha^P$, $\alpha>0$ being a fixed parameter, defined by the formula  
\[\Entr_\alpha^P(X)=\tfrac 1 \alpha\log(\E_P[e^{\alpha X}]),\quad X\in B_s(\Sigma).\]
If $\Sigma$ is a $\sigma$-algebra, this functional can be defined on the larger space $B(\Sigma)$ without any problem. 
We claim that  \begin{equation}\label{eq:all}\mathcal L^f=\{P\}\quad\text{and}\quad\mathcal U^f=\varnothing,\end{equation}
meaning that $\sup\mathcal L^f=P$ while $\inf\mathcal U^f$ does not exist. 
In order to verify \eqref{eq:all}, consider $X\in B_s(\Sigma)$ which can be represented as $X=\sum_{i=1}^nx_i\ind_{A_i}$ for suitable pairwise disjoint events $A_1,...,A_n$. As
\[\Entr_\alpha^P(X)=\tfrac 1 \alpha\log\bigg(\sum_{i=1}^ne^{\alpha x_i}P[A_i]\bigg),\]
$P\in\mathcal L^f$ follows from the finite-dimensional Jensen  inequality. Now, for arbitrary $\mu\in\mathcal L^f$, $X$ as above, and $t>0$, 
\begin{equation}\label{eq:Gateaux}\E_\mu[X]\le\frac{\Entr_\alpha^P(tX)}t=\frac{\log(\sum_{i=1}^ne^{\alpha tx_i}P[A_i])}{\alpha t}\to \E_P[X],\quad t\downarrow 0.\end{equation}
As $X$ was arbitrary, this can only hold if $\mu=P$. 
Symmetrically, if we could choose $\mu\in\mathcal U$, we would have for all $A\in\Sigma$ with $P[A]>0$ that 
\begin{equation}
\label{eq:no}\mu(A)\ge \lim_{t\uparrow\infty}\frac{\log(e^{\alpha t}\P[A]+1-P[A])}{\alpha t}=\lim_{t\uparrow\infty}\frac{e^{t}\P[A]}{e^{t}\P[A]+1-P[A]}=1.\end{equation}
As $P$ has convex range, there are arbitrarily large finite partitions of $\Omega$ into events with positive $P$-probability. Hence, no $\mu\in\ba$ can satisfy inequality \eqref{eq:no}.

Equation~\eqref{eq:Gateaux} shows that $P$, the unique element of $\mathcal L^f$, may be interpreted as the directional derivative
$\lim_{t \downarrow 0} \frac{\Entr_\alpha^P(tX)}{t} = \E_P[X]$.
Hence, access to the values of the entropic risk measure on $B_s(\Sigma)$ enables the decision maker to compute this derivative and thereby identify the reference probability $P$, independently of $\alpha > 0$. 
Nevertheless, the inability to infer $\alpha$ itself may be regarded as a limitation. 
However, we shall see momentarily that the dual infima and suprema respond sensitively to the domain on which the functional is defined, which will allow to fix this problem.

The entropic risk measure is not a distortion riskmetric. Nevertheless, it induces a game ${v_\alpha}(A):=\Entr_\alpha^P(\ind_A)$ which is given by applying the concave transformation
\begin{equation}\label{eq:h}
h_\alpha(x)=\tfrac 1 \alpha\log\big((e^\alpha-1)x+1\big),\quad x\ge 0,\end{equation}
to the argument $P[A]$.
Examining this capacity instead of the functional on the whole space provides a markedly different perspective.
Indeed,
\begin{equation}
    \label{claimENT}
\sup\acore_{v_\alpha}=\sup\LA_{v_\alpha}=\inf\LC_{v_\alpha}=\frac{e^\alpha-1}\alpha P,
\end{equation}
i.e., not only do the infimum of the loose core and the supremum of the (loose) anticore agree, but they also allow us to elicit {\em both} the reference probability $P$ and the parameter $\alpha$. 

To prove \eqref{claimENT}, we focus first on the loose core $\LC_{v_\alpha}$.
Concavity of the function $h_\alpha$ in \eqref{eq:h} implies that the constant $b^\star:=\inf\{b\in\R\colon bP\ge {v_\alpha}\}$ is given by the right-hand derivative $h_\alpha'(0)=\frac{e^\alpha-1}\alpha$. 
In view of Corollary~\ref{cor:sub}, subadditivity of ${v_\alpha}$ implies 
$\inf\LC_{v_\alpha}=\frac{e^\alpha-1}\alpha P.$
Regarding $\acore_{v_\alpha}$ and $\LA_{v_\alpha}$, these sets are order bounded above by $\inf\LC_{v_\alpha}$ and thus have a supremum. 
For events $D$ with $0<P[D]<1$, 
we set
$$\mu_D:=h_\alpha(P[D])P^D+\big[1-h_\alpha(P[D])\big]P^{D^c}.$$ 
It can be shown that $\mu_D$ is a charge in $\acore_{v_\alpha}$. 

Now, for $A\in\Sigma$ with $p:=P[A]>0$, let $D_1,\dots,D_n$ form a measurable partition of $A$ such that each event has probability $p/n$. 
We estimate
\begin{align*}(\sup\LA_{v_\alpha})(A)&\ge (\sup\acore_{v_\alpha})(A)=\sum_{i=1}^n(\sup\acore_{v_\alpha})(D_i)\\
&\ge \sum_{i=1}^n\mu_{D_i}(D_i)=nh_\alpha\big(\tfrac{p}n\big)=\frac{h_\alpha(p/n)}{p/n}p.
\end{align*}
Letting $n\to\infty$ delivers 
$$(\sup\LA_{v_\alpha})(A)\ge (\sup\acore_{v_\alpha})(A)\ge h_\alpha'(0)P[A]=(\inf\LC_{v_\alpha})(A),$$
which is sufficient for the claim.

\subsection{Expected Shortfall}\label{ex:ES}

For a parameter $\beta \in [0,1)$, the Expected Shortfall (ES) risk measure at level $\beta$ is defined by
\[
\es_\beta^P(X) := \frac{1}{1 - \beta} \int_\beta^1 \var_q^P(X) \, {\rm d}q, \quad X \in B_s(\Sigma).
\]
Here,
\begin{equation}\label{eq:VaR}
\VaR^P_q(X) := \inf\{x \in \mathbb{R} \colon P[X \le x] \ge q\}
\end{equation}
denotes the Value-at-Risk at level $q$ under $P$. It is well known that $\es^P_\beta$ is a distortion riskmetric with associated subadditive capacity
\begin{equation}\label{ES cap}
v_\beta(A) := \min\big\{ \tfrac{P[A]}{1 - \beta}, 1 \big\}, \quad A \in \Sigma.
\end{equation}
As $v_\beta$ determines the functional on $B_s(\Sigma)$ uniquely, restricting one's attention to its smaller domain of indicator random variables is readily justified.

We first consider $\es_\beta^P$ defined on the entire space $B_s(\Sigma)$. 
If $\beta\in(0,1)$, the upper supporting set $\mathcal U^f$ is empty because it could only contain probability charges, but the condition $\E_\mu[\cdot]\ge\es_\beta$ forces $\mu(\Omega)\ge \frac 1{1-\beta}$.
The lower supporting set $\mathcal{L}^f$ coincides with the anticore $\acore_{v_\beta}$. From \eqref{eq:x} below, we obtain
\[\sup \mathcal{L}^f = \tfrac{1}{1 - \beta} P,
\]
i.e., at least the dual supremum exists. 
Moreover, in contrast to the entropic risk measure case in Sect.~\ref{ex:entr}, we can elicit both the reference probability $P$ and the level $\beta$.

Now, consider the capacity $v_\beta$. We claim that
\begin{equation}\label{eq:x}
\sup \acore_{v_\beta} = \sup \LA_{v_\beta} = \inf \LC_{v_\beta} = \tfrac{1}{1 - \beta} P.
\end{equation}
The conclusion regarding the loose core follows analogously to the reasoning in Sect.~\ref{ex:entr}. For the analysis of the (loose) anticore, we introduce the notation
\[
\mu_D := \tfrac{P[D]}{1 - \beta} P^D + \tfrac{1 - \beta - P[D]}{1 - \beta} P^{D^c}
\]
for events $D$ satisfying $0 < P[D] < 1 - \beta$. Each charge $\mu_D$ lies in $\acore_{v_\beta}$.

Now, fix an event $A \in \Sigma$ with $P[A] > 0$, and partition $A$ into disjoint events $D_1, \dots, D_n$, each with $P[D_i] < 1 - \beta$. Then,
\[
(\sup \LA_{v_\beta})(A) \ge (\sup \acore_{v_\beta})(A) = \sum_{i=1}^n (\sup \acore_{v_\beta})(D_i) \ge \sum_{i=1}^n \mu_{D_i}(D_i) = \frac{P[A]}{1 - \beta}.
\]
This suffices to establish the claim in \eqref{eq:x}.

\subsection{Value-at-Risk}\label{ex:VaR}

Now we consider the Value-at-Risk---or quantile---class $\VaR^P_\gamma$, $0<\gamma<1$, defined by \eqref{eq:VaR}. We exclude the degenerate cases $\var^P_0$ and $\var^P_1$ because they are not of practical or regulatory relevance and do not have a unique reference probability, see Liebrich~\cite{Liebrich}.
For the associated game ${v_\gamma}\colon \Sigma\to\R$ given by 
\begin{equation}\label{eq:vgamma}v_\gamma(A)=\var^P_\gamma(\ind_A)=\begin{cases}1&\quad\text{if }P[A]>1-\gamma,\\
0&\quad\text{else},\end{cases}\end{equation}
and its loose (anti)core, we claim that
\begin{equation}\label{sup1}\sup\LA_{v_\gamma}=0\end{equation}
and
\begin{equation}\label{inf1}\inf\LC_{v_\gamma}=0.\end{equation}
For \eqref{sup1}, fix arbitrary $\mu\in\LA_{v_\gamma}$ and $A\in\Sigma$ and split $A$ into subevents $D_1,\dots,D_n$ of $P$-probability at most $1-\gamma$ to obtain 
\[\mu(A)=\sum_{i=1}^n\mu(D_i)\le \sum_{i=1}^n{v_\gamma}(D_i)=0.\]
As for \eqref{inf1}, we first observe that $(1-\gamma)^{-1} P\in\LC_{v_\gamma}$ and that each $\mu\in\mathcal\LC_{v_\gamma}$ satisfies $\mu\ge 0$.  
Now let $D\in\Sigma$ with $\delta:=P[D]\in(0,1-\gamma)$. 
For arbitrary $s\in (0,\frac 1 2)$ choose 
$$x_s:=\max\Big\{\tfrac{(1-s)(1-\delta)}{1-\gamma-\delta},\tfrac{(1-\delta)s}{\delta}+1\Big\}$$
and consider 
$$\mu_{D,s}:=s P^D+x_sP^{D^c}.$$
For $A\in\Sigma$ with $P[A]>1-\gamma$, we infer from $s/\delta<x_s/(1-\delta)$ that
$$\mu_{D,s}(A)=\tfrac s\delta P[A\cap D]+\tfrac{x_s}{1-\delta}P[A\cap D^c]>s+\tfrac{x_s(1-\gamma-\delta)}{1-\delta}\ge 1.$$
This means that $\mu_{D,s}\in\LC_{v_\gamma}$ and that 
$$(\inf\LC_{v_\gamma})(D)\le \inf_{0<s<\frac 1 2}\mu_{D,s}(D)=0.$$
\eqref{inf1} is now established by partitioning arbitrary events $A$ into finitely many pieces whose $P$-probability is in $(0,1-\gamma)$ and using additivity of $\inf\LC_{v_\gamma}$.

In summary, while Proposition~\ref{cor:main1} holds true, we fail to elicit the reference probability of the VaR-capacity.

\section{Value-at-Risk}\label{sec:VaR}

We have seen in Sect.~\ref{ex:VaR} that our approach fails when applied to the Value-at-Risk capacity. 
However, Liebrich~\cite{Liebrich} shows that the VaR admits only one convex-ranged reference probability.
In view of the central role of VaR in practical risk assessment, it is natural to ask whether our method can be adapted to yield a viable elicitation procedure in this special case. 
This section pursues this goal, which requires greater care. 

\subsection{Eliciting the reference probability of VaR-capacities}

Throughout this section, $\Sigma$ is an algebra and $P$ a convex-ranged probability charge. 
The first tweak is that, in addition to the Value-at-Risk defined in Sect.~\ref{ex:VaR},  
we may need to consider the so-called right quantile, or right VaR, i.e., the functional
\[\overline{\var}^P_\gamma(X):=\inf\{x\in\R\colon P[X\le x]>\gamma\},\quad X\in B_s(\Sigma).\]
This change of perspective comes at no loss. 
If we focus on indicators $\ind_A$ of events $A\in \Sigma$, one observes that 
\begin{equation}\label{eq:duality}\overline{\var}_{1-\gamma}^P(\ind_A)=1-\var_\gamma^P(\ind_{A^c}),\quad A\in \Sigma,\end{equation}
i.e., the two classes of capacities are dual to each other. Knowing the values of one of them is sufficient to elicit the reference probability of both of them.

Moreover, it is easy to see that 
\begin{equation}\label{eq:iff}\rvar_\gamma^P(\ind_A)=\begin{cases}1&\quad\text{if }P[A]\ge 1-\gamma,\\
    0&\quad\text{else}.\end{cases}\end{equation}
All subsequently appearing subsets of $\Omega$ are events in the underlying algebra $\Sigma$.

Our approach assumes knowledge of the values a VaR-capacity assigns to all measurable subsets of $\Omega$, while treating {\em both} the reference probability $P$ and parameter  $\gamma\in(0,1)$ in $\var_\gamma^P$ as unknown. 
As a first step, this requires distinguishing between a {\em small $\gamma$ case} ($\gamma\le \frac 1 2$) and a {\em large $\gamma$ case} ($\gamma>\frac 1 2$). 
This distinction can be accomplished by the following test based on \eqref{eq:vgamma}.

\begin{lemma}\label{lem:test}
    Let $0<\gamma<1$. Then, there exists $A\in\Sigma$ with $\var_\gamma^P(\ind_A)=\var_\gamma^P(\ind_{A^c})=0$ if and only if $\gamma\le \tfrac 1 2$.
\end{lemma}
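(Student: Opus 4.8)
The plan is to reduce both vanishing conditions to explicit inequalities on $P(A)$ by invoking the characterisation \eqref{eq:vgamma} of the VaR-capacity on indicators, and then to settle the existence question using the convex range of $P$.

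First I would record the two membership conditions separately. By \eqref{eq:vgamma}, $\var_\gamma^P(\ind_A)=0$ holds precisely when $P(A)\le 1-\gamma$. Applying the same equivalence to the complementary event, $\var_\gamma^P(\ind_{A^c})=0$ is equivalent to $P(A^c)\le 1-\gamma$, that is, to $P(A)\ge\gamma$. Hence the conjunction of the two requirements is equivalent to the single condition $P(A)\in[\gamma,\,1-\gamma]$.

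Second, I would translate the resulting existence problem into a statement about the range of $P$. Since $P$ is convex-ranged, specialising the defining property to $A=\Omega$ gives $\{P(B)\mid B\in\Sigma\}=[0,P(\Omega)]=[0,1]$, so $P$ attains every value in $[0,1]$. Consequently, an event $A$ with $P(A)\in[\gamma,1-\gamma]$ exists if and only if the interval $[\gamma,1-\gamma]$ is nonempty, which occurs precisely when $\gamma\le 1-\gamma$, i.e.\ $\gamma\le\tfrac 1 2$. Chaining the two steps yields the claimed equivalence.

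There is no genuine obstacle here; the only point demanding a little care is the observation that convex range of $P$ furnishes the full image $[0,1]$, and not merely some intermediate-value-type property, which is exactly what guarantees an event of prescribed probability inside the target interval. Everything else is a routine rewriting of the two vanishing conditions via \eqref{eq:vgamma}.
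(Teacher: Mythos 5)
Your proof is correct and follows exactly the route the paper intends: the paper states the lemma without proof, noting only that the test is ``based on \eqref{eq:vgamma}'', and your argument---reducing both vanishing conditions to $P(A)\in[\gamma,1-\gamma]$ and then using that convex range of $P$ gives the full image $[0,1]$---is precisely that intended computation, carried out carefully.
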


The second step of our approach recursively constructs a new set function on the basis of the original VaR-capacity. 
The construction procedure distinguishes two cases, and the test in Lemma~\ref{lem:test} determines which case to follow.  

\subsubsection{Small $\gamma$ case}

If $\gamma \le \tfrac 1 2$, we recursively define a family of set functions $(g_t)_{t\in \N_0}$ on $\Sigma$ by 
\begin{equation}\label{eq:def}\begin{cases}g_0(A) = \rvar_{1-\gamma}^P(\ind_A)=1-\var_\gamma^P(\ind_{A^c}),\\
g_{t}(A) =    \sup_{B\in \Sigma} \inf_{C\subseteq A^c}[g_{t-1} (A\cup B) + g_{t-1} (A\cup C) -g_{t-1}(B\cup C)]\wedge 1,\quad t\in \N.\end{cases}
\end{equation}

Note that the only information we need to perform the recursion is the values of $\var_\gamma^P$.

\begin{lemma}\label{lem:VaR1}
Suppose that $\gamma\le \tfrac 1 2$.
For $t\in \N_0$, let $g_t$ be defined by \eqref{eq:def}. 
Then the following statements hold: 
\begin{enumerate}
    \item[(i)] For all $A,A'\in \Sigma$, $P[A]\le P[A']$ implies $g_t(A)\le g_t(A')$;
    \item[(ii)] $g_t(\varnothing)=0$;
    \item[(iii)] it holds that $$g_t(A)=\begin{cases}1&\quad\text{if }P[A]\ge  2^{-t}\gamma,\\
0&\quad\text{else,}\end{cases}\qquad A\in\Sigma.$$
\end{enumerate}
\end{lemma}
\begin{proof}
We proceed by induction over $t$.
For $t=0$ we have $g_0=\rvar_{1-\gamma}^P$, and (iii) follows from \eqref{eq:iff}. 
Moreover, statements (i) and (ii) clearly hold. 

Suppose now that (i)--(iii) hold true for $t=0,\dots,n-1$.
We first verify statement (iii) for $g_n$, beginning with the fact that $g_n$ only attains values 0 and 1. 
To this effect, let $A,B,C\in\Sigma$ be arbitrary and observe 
$$[g_{n-1} (A\cup B) + g_{n-1} (A\cup C) -g_{n-1}(B\cup C)]\wedge 1\in\{-1,0,1\}.$$
Choosing $B=\Omega$ in the supremum 
part in \eqref{eq:def} and observing that $g_{n-1}(\Omega)=1$ by induction hypothesis, $g_n(A)\in\{0,1\}$ for all $A\in\Sigma$.

Now, we have to prove for arbitrary $A\in \Sigma$ that 
\begin{equation}\label{eq:prob}P[A]\ge 2^{-n}\gamma\quad\iff\quad g_n(A)=1.\end{equation}
To this effect, we distinguish three cases. First, if $P[A] \ge 2^{-n+1}\gamma$, choose $B=A$ to infer 
 $$\inf_{C\subseteq A^c}[g_{n-1} (A\cup B) + g_{n-1} (A\cup C) -g_{n-1}(B\cup C)]=g_{n-1} (A)=1.$$
 This shows $g_n(A)=1$.

Second, if $2^{-n}\gamma\le  P[A] <2^{-n+1}\gamma$, we select $B\subseteq A^c$
such that $P[A\cup B]=2^{-n+1}\gamma$. 
In particular, we have for each $C\subseteq A^c$ that 
$$P[A\cup C]=P[A]+P[C]\ge P[B]+P[C]\ge P[B\cup C].$$
Using (i) and (iii) for $g_{n-1}$,
$$ \inf_{C\subseteq A^c}[g_{n-1} (A\cup B) + g_{n-1} (A\cup C) -g_{n-1}(B\cup C)] 
\ge g_{n-1} (A\cup B) =1,
$$
which suffices to prove that $g_n(A)=1$. 

Third, suppose $P[A]< 2^{-n}\gamma$.
Let $B\in\Sigma$ be arbitrary. 
If $P[B]\le P[A]$, we obtain from 
$P[A\cup B]<2^{-n+1}\gamma$ that
\begin{align*}
 &\inf_{C\subseteq A^c}[g_{n-1} (A\cup B) + g_{n-1} (A\cup C) -g_{n-1}(B\cup C)]\\
 \le~&g_{n-1}(A\cup B) + g_{n-1} (A)-g_{n-1}(B)=0.
\end{align*}
Else, if $P[B] \ge 2^{-n+1}\gamma$, then
\begin{align*}
 &\inf_{C\subseteq A^c}[g_{n-1} (A\cup B) + g_{n-1} (A\cup C) -g_{n-1}(B\cup C)]\\
 \le~&g_{n-1} (A\cup B) + g_{n-1}(A) -g_{n-1}(B)\\
=~&1+0-1=0.
\end{align*}
Last, if 
$2^{-n+1}\gamma>P[B]>P[A]$, 
take $D\subseteq (A\cup B)^c$ such that $P[D\cup B]=2^{-n+1}\gamma$.
Note that such an event $D$ exists since $\gamma\le \tfrac 1 2$.
As $P[A\cup D]<2^{-n+1}\gamma$ holds by construction, 
\begin{align*}&\inf_{C\subseteq A^c}[g_{n-1} (A\cup B) + g_{n-1} (A\cup C) -g_{n-1}(B\cup C)]\wedge 1\\
\le~&g_{n-1}(A\cup B) + g_{n-1}(A\cup D) -g_{n-1}(B\cup D)\\
\le~&1 +0  -1 
 = 0.
 \end{align*}

In summary, we conclude $g_n(A)\le 0$, and thus $g_n(A)=0$ since it cannot be negative. This completes the proof of equivalence \eqref{eq:prob}. Statements (i) and (ii) of the lemma follow immediately from statement (iii). 
\end{proof}

Based on Lemma \ref{lem:VaR1}, we introduce a new set function $v$ on $\Sigma$ by 
\begin{align}\label{eq:v1}
v(A)&=\begin{cases}\sup\{2^{-t} \colon t\in\N_0\text{ and }g_t (A)=1\} &\quad\text{if }g_t(A)=1\text{ for some }t\in\N_0\\[-0.3ex]
0&\quad\text{else}\end{cases}\nonumber\\
&= \begin{cases}\sup\{2^{-t} \colon t\in\N_0\text{ and }P[A]\ge 2^{-t}\gamma\}&\quad\text{if }P[A]>0\\[-0.3ex]
0&\quad\text{else}.\end{cases}
\end{align} 
The set function
$v$ can be interpreted as an approximation of the size of the $P$-probability of an event $A$ that becomes more and more accurate the smaller that probability is.

Finally, in the third step of the modified elicitation procedure, we demonstrate that $v$ is a game for which the reference probability can be elicited using the methods outlined in Sect.~\ref{sec:distortion}.

\begin{proposition}\label{prop:VaR1}
Suppose that $\gamma\le \frac 1 2$ and that game $v$ is given by \eqref{eq:v1}. 
Then the loose core $\LC_v$ satisfies
$$\inf\LC_v=\tfrac 1 \gamma P,$$
i.e., for $\mu_\star:=\inf\LC_v$, we have  $P=\mu_\star(\Omega)^{-1}\mu_\star$and $\gamma=\frac{1}{\mu_\star(\Omega)}.$
\end{proposition}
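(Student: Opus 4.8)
The plan is to show that the charge $\tfrac1\gamma P$ is simultaneously an element and the greatest lower bound of $\LC_v$, so that it is the minimum of $\LC_v$ and therefore equals $\inf\LC_v$. Membership is the easy half: by the very definition \eqref{eq:v1}, every $2^{-t}$ entering the supremum that defines $v(A)$ satisfies $2^{-t}\le P(A)/\gamma$, so $v(A)\le \tfrac1\gamma P(A)$ for all $A\in\Sigma$. Hence $\tfrac1\gamma P\ge v$ setwise, i.e.\ $\tfrac1\gamma P\in\LC_v$; in particular $\LC_v\neq\varnothing$ and $\inf\LC_v\le \tfrac1\gamma P$.

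The substantive half is to verify that $\tfrac1\gamma P$ is in fact a lower bound, i.e.\ $\mu\ge \tfrac1\gamma P$ setwise for every $\mu\in\LC_v$. Fix such a $\mu$ and an event $A$ with $p:=P(A)>0$. The decisive observation is that $v$ is \emph{exact} on events whose probability equals one of the breakpoints $2^{-t}\gamma$: from \eqref{eq:v1} one reads off $v(B)=2^{-t}=\tfrac1\gamma P(B)$ whenever $P(B)=2^{-t}\gamma$. Exploiting the convex range of $P$, I would, for each large $t\in\N$, carve $A$ into pairwise disjoint pieces $A_1,\dots,A_N$ with $P(A_i)=2^{-t}\gamma$ together with a remainder $R$, where $N=\lfloor 2^t p/\gamma\rfloor$ and $P(R)\in[0,2^{-t}\gamma)$. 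Additivity of $\mu$ together with $\mu\ge v\ge 0$ then yields
\[
\mu(A)\;\ge\;\sum_{i=1}^N\mu(A_i)\;\ge\;\sum_{i=1}^N v(A_i)\;=\;N\,2^{-t}\;=\;\lfloor 2^t p/\gamma\rfloor\,2^{-t}.
\]
Since the right-hand side tends to $p/\gamma$ as $t\to\infty$ while $\mu(A)$ is a fixed number, we conclude $\mu(A)\ge \tfrac1\gamma P(A)$; as $A$ was arbitrary, $\mu\ge\tfrac1\gamma P$.

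Combining the two halves gives $\tfrac1\gamma P=\min\LC_v=\inf\LC_v$, which is the displayed identity. The remaining two assertions are then bookkeeping: writing $\mu_\star=\inf\LC_v=\tfrac1\gamma P$ gives $\mu_\star(\Omega)=\tfrac1\gamma$, hence $\gamma=\mu_\star(\Omega)^{-1}$ and $P=\gamma\,\mu_\star=\mu_\star(\Omega)^{-1}\mu_\star$. (Alternatively, since $\LC_v$ is bounded below by $0$ in the Dedekind complete lattice $\ba$, the infimum exists a priori, so one may invoke Proposition~\ref{cor:main1}(iii) to obtain $\inf\LC_v=bP$ for some scalar $b$ and then use the two inequalities above merely to pin down $b=1/\gamma$.) The main obstacle is the lower-bound step: everything hinges on partitioning at the breakpoints $2^{-t}\gamma$, where the step function defining $v$ touches the line $x\mapsto x/\gamma$, so that the lower sums $\sum_i v(A_i)$ converge to $p/\gamma$ rather than to the strictly smaller value a generic partition would produce.
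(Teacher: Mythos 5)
Your proof is correct and follows essentially the same route as the paper: both halves (checking $v\le\tfrac1\gamma P$ so that $\tfrac1\gamma P\in\LC_v$, and partitioning $A$ into pieces of probability exactly $2^{-t}\gamma$ plus a remainder to force $\mu(A)\ge \lfloor 2^tP(A)/\gamma\rfloor 2^{-t}\to P(A)/\gamma$) are exactly the paper's argument. The only cosmetic difference is that you omit the trivial case $P(A)=0$, which the paper dispatches via $\mu(A)\ge v(A)=0$.
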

\begin{proof}
By construction, $v\le\frac 1 \gamma P$ and the associated loose core $\LC_v$ is non\-empty. 
Next, pick arbitrary $\mu\in\LC_v$ and $A\in\Sigma$.
If $P[A]=0$, we have 
$$\mu(A)\ge v(A)=0=\tfrac 1 \gamma P[A].$$
If $P[A]>0$, let $n\in\N$ large enough such that we can partition $A$ into pairwise disjoint subevents $A_1,\dots, A_{m_n+1}$ with the property 
$P[A_i]=2^{-n}\gamma$, $1\le i\le m_n$, and $P[A_{m_n+1}]<2^{-n}\gamma$.
The set $A_{m_n+1}$ may be empty. 
Then, 
$$\mu(A)=\sum_{i=1}^{m_n+1}\mu(A_i)\ge \sum_{i=1}^{m_n+1}v(A_i)\ge m_n 2^{-n}=\frac{m_n 2^{-n}\gamma}{\gamma}.$$
As $n\to\infty$, we obtain 
$$\mu(A)\ge \tfrac{P[A]}{\gamma},$$
which suffices to show that $\inf\LC_v=\frac 1 \gamma P$. 
\end{proof}

\subsubsection{Large $\gamma$ case}

In case $\gamma\in(\frac 1 2,1)$, the second step of our modified elicitation procedure  requires a different recursion based on the family $(h_t)_{t\in \N_0}$ of set functions on $\Sigma$ defined by
\begin{equation}\label{eq:defh}
\begin{cases}h_0(A):=\VaR^P_\gamma(\ind_{A}),\\
h_t(A)=\sup_{B\in \Sigma} \inf_{C\subseteq A^c}[h_{t-1} (A\cup B) + h_{t-1} (A\cup C) -h_{t-1}(B\cup C)]\wedge 1,\quad t\in\N.\end{cases}
\end{equation}
Again, the recursion---and thus the computation of $w$ below---requires only the values of $\var_\gamma^P$ as initial input. 
The following lemma is the analogue of Lemma~\ref{lem:VaR1}.
Given that its proof is structurally similar to the preceding one, we will omit the detailed exposition.

\begin{lemma}
Suppose that $\gamma>\tfrac 1 2$.
For $t\in \N_0$, let $h_t$ be defined by \eqref{eq:defh}. 
Then the following statements hold: 
\begin{enumerate}
    \item[(i)] For all $A,A'\in \Sigma$, $P[A]\le P[A']$ implies $h_t(A)\le h_t(A')$; 
    \item[(ii)] $h_t(\varnothing)=0$;
    \item[(iii)] it holds that $$h_t(A)=\begin{cases}1&\quad\text{if }P[A]>2^{-t}(1-\gamma),\\[-0.5ex]
0&\quad\text{else,}\end{cases}\qquad A\in\Sigma.$$
\end{enumerate}
\end{lemma}

In analogy with \eqref{eq:v1}, we set 
\begin{align}\label{eq:v3}
    w(A)&=\begin{cases}\sup\{2^{-t}\colon t\in\N_0\text{ and }h_t (A)=1\} &\quad\text{if }h_t(A)=1\text{ for some }t\in\N_0\\[-0.3ex]
0&\quad\text{else}\end{cases}\nonumber\\
&=\begin{cases}\sup\{2^{-t}\colon t\in\N_0\text{ and }P[A]>2^{-t}(1-\gamma)\}&\quad\text{if }P[A]>0\\[-0.3ex]
0&\quad\text{else}.\end{cases}
\end{align}
Parallel to Proposition~\ref{prop:VaR1}, we can apply the methodology from Sect.~\ref{sec:distortion} to $w$. 

\begin{proposition}\label{prop:VaR2}
Suppose that $\gamma>\frac 1 2$ and define game $w$ by \eqref{eq:v3}. 
Then the loose core $\LC_w$ satisfies 
\[\inf\LC_w=\tfrac 1{1-\gamma}P,\]
i.e., for $\nu_\star:=\inf\LC_w$, we have $P=\nu^\star(\Omega)^{-1}\nu^\star$and $\gamma=1-\tfrac1{\nu^\star(\Omega)}.$
\end{proposition}

\begin{proof}
By construction, $w\le\frac 1{1-\gamma}P$, i.e., the associated loose core $\LC_w$ is non\-empty. 
Next, pick arbitrary $\mu\in\LC_w$ and $A\in\Sigma$.
If $P[A]=0$, we have 
$$\mu(A)\ge w(A)=0=\tfrac 1{1-\gamma}P[A].$$
If $P[A]>0$, let $n\in\N$ large enough such that we can partition $A$ into pairwise disjoint subevents $A_1,\dots, A_{m_n}$ with the properties 
$2^{-n}(1-\gamma)<P[A_i]\le 2^{-n+1}(1-\gamma)$, $1\le i\le m_n$, and $\lim_{n\to\infty}m_n2^{-n}(1-\gamma)=P[A]$. 
Then, 
$$\mu(A)=\sum_{i=1}^{m_n}\mu(A_i)\ge \sum_{i=1}^{m_n}w(A_i)=m_n 2^{-n}=\frac{m_n 2^{-n}(1-\gamma)}{1-\gamma}.$$
As $n\to\infty$, we obtain $\mu(A)\ge \tfrac{P[A]}{1-\gamma}$,
which suffices to show that $\inf\LC_w=\frac 1{1-\gamma}P$. 
\end{proof}

\subsection{Comparison to axiomatisations of quantiles}

With Propositions~\ref{prop:VaR1} and \ref{prop:VaR2}, we have solved the problem of identifying the reference probability for (non-degenerate) VaR capacities.
In this subsection, we briefly compare our solution to existing work on axiomatising quantile preferences, focusing in particular on the paper by Rostek~\cite{Rostek}.
In contrast, the approach of Chambers~\cite{Chambers},  recently strengthened by Fadina et al.~\cite{OneAxiom}, treats VaR as a functional on distribution functions. The latter perspective differs fundamentally from ours, which is grounded in random variables and events, and avoids the challenge of eliciting a reference probability altogether.

The setting in Rostek~\cite{Rostek}, however, shares some features with ours.
This setting concerns a ``Savagean model of purely
subjective uncertainty'', where acts play the role of random variables. 
One of the key goals of the paper (see \cite[Theorem 1]{Rostek}) is to characterise axiomatically---up to complications created by degenerate cases---whether a given preference relation has a numerical representation by a quantile function; that is, by $\var_\gamma^P$ for a suitable parameter $\gamma$ and a convex-ranged probability charge $P$. 

In order to prove Theorem 1, Rostek~\cite{Rostek} needs to answer three questions: Does the numerical representation belong to the VaR-class?
 What is the (convex-ranged) reference probability $P$?
And what is the parameter $\gamma$?
The elicitation we perform in the present section only addresses the last two questions and presumes that the first one is answered affirmatively. 
Thus, our contribution can be seen as a step in the broader framework of the proof of \cite[Theorem 1]{Rostek}. 
The latter, however, is very intricate and uses the machinery of Fishburn's \cite[Chapter 14]{Fishburn} derivation of subjective expected utility. 
Our approach is  comparatively direct, yet firmly grounded in the general principles explored in Sects.~\ref{sec:first} and \ref{sec:distortion}.
How one would obtain Propositions~\ref{prop:VaR1} and \ref{prop:VaR2} from \cite{Rostek} more easily---if possible at all---is unclear to us.

\section{Conclusion}

Our paper studies the problem of finding the {\em a priori} unknown reference measure of a functional that we suspect to be law invariant. 
In a nutshell, the results we prove target lower (upper) supporting sets of that functional, signed measures---or signed charges---whose integrals are pointwise bounded above (below) by the functional in question. 
It is then shown that the supremum (infimum) of this set in the vector lattice of signed charges---if it exists---is a multiple of the reference measure.
This multiple may be zero. 
In cases where it is, the results make it possible to pin down the only possible candidate for the reference measure and to potentially disprove law invariance of the functional altogether. 

In the important case of the Value-at-Risk (quantile functionals), the previous approach does not work directly. 
However, it delivers a neat way to elicit the reference measure when combined with some additional subtle steps.

The results obtained in the paper are of a theoretical nature, and we acknowledge that their implementation  in practical applications is largely unaddressed. We leave this important aspect to future research. 
A natural direction would be to investigate how finding the candidate probability measures can be operationalised or realised algorithmically, especially within the context of financial data and regulatory schemes.

\appendix

\section{Preliminaries on charges and set functions}\label{sec:ancillary}

\subsection{Definitions}\label{app:definitions}

Let $\Sigma$ be an algebra on $\Omega$. 
The real vector space $\ba$ collects all signed charges $\mu\colon\Sigma\to\R$ for which $\sup_{A\in\Sigma}|\mu(A)|<\infty$.
Equipped with the setwise order $\mu\le \nu$, which holds if $\mu(A)\le \nu(A)$ is satisfied by all $A\in\Sigma$, it is a {\em Dedekind complete}
vector lattice, i.e., every upper bounded subset has a supremum; see \cite[Theorem 10.53]{Ali}. 
We also write $|\mu|:=\mu\vee(-\mu)$.
Next, for $\mu\in\ba$ and $\nu\in\ba_+:=\{\mu\in\ba\colon \mu\ge 0\}$, we write 
$\mu\ll\nu$ if $\nu(A)=0$ implies $\mu(A)=0$.
If $\Sigma$ is a $\sigma$-algebra, the subspace $\ca$ of countably additive signed measures therein is a so-called band. That is, every supremum of upper bounded subsets of $\ca$ lies in $\ca$ itself, and $\ca$ is an \emph{ideal} in the sense that if $\mu \in \ca$ and $|\nu| \leq |\mu|$, then $\nu \in \ca$ as well; see \cite[Chapter 8.9]{Ali}.

A game $v$ on $\Sigma$ is {\em submodular} if, for all $A,B\in\Sigma$, 
\begin{equation}\label{eq:submodular}v(A)+v(B)\ge v(A\cap B)+v(A\cup B).\end{equation}
If $v$ has bounded variation (see Marinacci and Motrucchio \cite{MariMont}), then submodularity is equivalent to subadditivity of the associated Choquet integral. 
In particular, every {\em capacity}, i.e., every game that is nondecreasing with respect to set inclusion, has bounded variation. 

\subsection{Ancillary results}

Throughout the rest of this section, we work in the setting of Assumption~\ref{ass:structure}(B) with a convex-ranged probability charge $P$ over an algebra $\Sigma$. Moreover, $\mu$ denotes an element of $\ba$.

\begin{lemma}\label{lem:sup}
Suppose that $\mu\in\ba$ satisfies $\mu\ll P$ and define $\rho_\mu\colon B_s(\Sigma)\to\R$ by 
    \begin{equation}\label{rho_mu}\rho_\mu(X)=\sup_{Y\sim_P X}\E_\mu[Y].\end{equation}
    Then $\rho_\mu$ is a subadditive, positively homogeneous, $P$-invariant, and continuous functional. 
    If $\Sigma$ is a $\sigma$-algebra and $P$ is replaced by an atomless probability measure $\P$, extending the defining equation \eqref{rho_mu} to $B(\Sigma)$ also provides a subadditive, positively homogeneous, $\P$-invariant, and continuous functional. 
\end{lemma}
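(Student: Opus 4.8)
The plan is to verify each claimed property of $\rho_\mu$ in turn, treating the simple-function case on an algebra and the bounded-function case on a $\sigma$-algebra in parallel, since the arguments are structurally identical once the correct integration theory is in place. First I would establish positive homogeneity and subadditivity, which are the most immediate. For positive homogeneity, observe that for $t\ge 0$ the map $Y\mapsto tY$ is a bijection of the $\sim_P$-equivalence class of $X$ onto that of $tX$, so $\rho_\mu(tX)=\sup_{Y\sim_P X}\E_\mu[tY]=t\rho_\mu(X)$ using linearity of the integral. For subadditivity, given $X_1,X_2\in B_s(\Sigma)$, any $Y\sim_P X_1+X_2$ must be compared against sums $Y_1+Y_2$ with $Y_i\sim_P X_i$; the key point is that $\E_\mu[Y]\le \rho_\mu(X_1)+\rho_\mu(X_2)$ should be derived by decomposing, but the cleaner route is to note $\sup_{Y\sim_P X_1+X_2}\E_\mu[Y]\le \sup_{Y_1\sim_P X_1}\E_\mu[Y_1]+\sup_{Y_2\sim_P X_2}\E_\mu[Y_2]$, which follows because every $Y\sim_P X_1+X_2$ can be realised (via convex range of $P$, which supplies enough ``randomness'') as $Y_1+Y_2$ for suitable equidistributed $Y_i$; here I would invoke the Hardy--Littlewood/rearrangement machinery already cited in the proof of Theorem~\ref{main1}.

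Next I would address $P$-invariance, which is essentially built into the definition: if $X\sim_P X'$, then $\{Y\mid Y\sim_P X\}=\{Y\mid Y\sim_P X'\}$, so the two suprema defining $\rho_\mu(X)$ and $\rho_\mu(X')$ are taken over the identical set and hence coincide. This step is immediate and requires only that $\sim_P$ is an equivalence relation.

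The main obstacle will be continuity, i.e.\ Lipschitz continuity with respect to $\Norm_\infty$. The plan is to show $|\rho_\mu(X)-\rho_\mu(X')|\le \norm{\mu}_{TV}\,\norm{X-X'}_\infty$, where $\norm{\mu}_{TV}=|\mu|(\Omega)$. Since $\rho_\mu$ is subadditive and positively homogeneous, it suffices to bound $\rho_\mu(X)-\rho_\mu(X')\le \rho_\mu(X-X')\le \norm{\mu}_{TV}\norm{X-X'}_\infty$, using subadditivity for the first inequality and, for the second, the estimate $\E_\mu[Y]\le \E_{|\mu|}[|Y|]\le \norm{\mu}_{TV}\norm{Y}_\infty=\norm{\mu}_{TV}\norm{X-X'}_\infty$ valid for every $Y\sim_P X-X'$ (note $\norm{Y}_\infty=\norm{X-X'}_\infty$ since equidistributed variables share the same essential range). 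The symmetric bound gives the result. The only delicate point is ensuring that the integral $\E_\mu$ genuinely dominates via total variation; on $B_s(\Sigma)$ this is elementary, while on $B(\Sigma)$ for a $\sigma$-algebra one must invoke the Dunford--Schwartz integral and the fact that $\mu\ll\P$ places $\mu$ in $\ba_\P$ so that the integral is well defined and the total-variation bound persists.

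Finally, for the second assertion—extending \eqref{rho_mu} to $B(\Sigma)$ when $\Sigma$ is a $\sigma$-algebra and $P=\P$ a measure—I would note that every $X\in B(\Sigma)$ is a uniform limit of simple functions, that all four properties (subadditivity, positive homogeneity, $\P$-invariance, continuity) are preserved under the continuous extension guaranteed by the Lipschitz bound just established, and that the supremum in \eqref{rho_mu} over the now-larger class of $\P$-equidistributed bounded variables is still finite precisely because of that same bound. I expect the continuity estimate to be the crux; once it is in hand, the extension and the verification of the remaining properties are routine.
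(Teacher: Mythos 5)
Your proposal is correct and follows essentially the same route as the paper: subadditivity via the convex-range decomposition of any $Z\sim_P X_1+X_2$ into equidistributed summands, and Lipschitz continuity via the total-variation bound (the paper couples a given $X'\sim_P X$ with a suitable $Y'\sim_P Y$ directly rather than passing through $\rho_\mu(X-X')$, but that difference is cosmetic). One small imprecision to fix: for $Y\sim_P X-X'$ one only has $\|Y\|_\infty=\|X-X'\|_\infty$ up to $P$-null modifications, so you should bound $\E_{|\mu|}[|Y|]$ by $|\mu|(\Omega)$ times the $P$-essential supremum of $|Y|$, which is legitimate because $\mu\ll P$ implies $|\mu|\ll P$.
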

\begin{proof}
    From $\mu\ll P$, we infer $\rho_\mu(0)=0$. 
    Now fix $X,Y\in B_s(\Sigma)$ and $X'\sim_P X$. 
    The convex range of $P$ permits selecting $Y'\sim_P Y$ such that $\|X-Y\|_\infty\ge \|X'-Y'\|_\infty$. For this pair $(X',Y')$, we obtain 
    $$E_\mu[X']\le \E_\mu[Y']+\E_{|\mu|}[|X'-Y'|]\le \rho_\mu(Y)+|\mu|(\Omega)\|X-Y\|_\infty.$$
    Together with $\rho_\mu(0)=0$, this is sufficient to show that $\rho_\mu$ only takes finite values and is continuous. 
    Positive homogeneity of $\rho_\mu$ is clear from its definition. For subadditivity, one should note that for every $Z\sim_P X+Y$ the convex range of $P$ allows to select $X^Z\sim_P X$ and $Y^Z\sim_P Y$ such that $X^Z+Y^Z=Z$.
\end{proof}

Now we introduce the set functions $s_\mu$ and $\iota_\mu$ on $\Sigma$ by 
\[s_\mu(A)=\sup\{\mu(B)\colon B\in\Sigma,P[B]=P[A]\}\]
and
\[\iota_\mu(A)=\inf\{\mu(B)\colon B\in\Sigma,P[B]=P[A]\}.\]
These are conjugate to each other via the relation
$$s_\mu(A)=\mu(\Omega)-\iota_\mu(A^c),\quad A\in\Sigma,$$ 
and in the special case $\mu\ll P$, we have 
$$s_\mu(A)=\rho_\mu(\ind_A)\quad\text{and}\quad \iota_\mu(A)=-\rho_\mu(-\ind_{A}),\quad A\in\Sigma.$$

\begin{lemma}[Theorem 8 in \cite{Basket1}]\label{lem:FH}
Suppose that $\mu\in\ba$ is linearly independent of $P$. 
Then, for all $A\in\Sigma$ with $P[A]\in(0,1)$,
\[\iota_\mu(A)<\mu(\Omega)P[A]<s_\mu(A).\]
\end{lemma}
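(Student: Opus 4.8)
The plan is to reduce everything to the single strict inequality $\mu(\Omega)P(A)<s_\mu(A)$ for every $A$ with $P(A)\in(0,1)$. Indeed, the conjugacy relation $s_\mu(A)=\mu(\Omega)-\iota_\mu(A^c)$ turns this inequality, applied to $A^c$, into the left-hand bound $\iota_\mu(A)<\mu(\Omega)P(A)$; moreover, since $A$ itself competes in both optimisations, one always has $\iota_\mu(A)\le\mu(A)\le s_\mu(A)$, which I will use in the equality analysis. Writing $p:=P(A)$ and $s_\mu=h_\mu\circ P$, the goal becomes $h_\mu(p)>p\,h_\mu(1)$ for $p\in(0,1)$, where $h_\mu(0)=0$ and $h_\mu(1)=\mu(\Omega)$.

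The engine is a lower bound for $h_\mu$ obtained from the Lebesgue decomposition $\mu=\lambda+\tau$ with $\lambda\lll P$ and $\tau\perp P$ (\cite[Theorem 6.2.4]{BR}). For the absolutely continuous part I would invoke the Hardy--Littlewood rearrangement inequality (\cite[Appendix A.3]{FoeSch}) together with the properties of $\rho_\lambda$ from Lemma~\ref{lem:sup} to see that $p\mapsto h_\lambda(p)$ is concave with $h_\lambda(p)\ge p\,\lambda(\Omega)$. For the singular part I would use $\tau\perp P$ to pick, for small $\eta>0$, an event $G$ with $P(G)<\eta$ carrying all but $\eta$ of $|\tau|$; then, using $\tau^+(A)=\sup_{B\subseteq A}\tau(B)$ (\cite[Theorem 10.53]{Ali}), a competitor of $P$-measure $p$ is built from a subset of $G$ capturing almost all of $\tau^+$, together with a near-$\lambda$-optimal set on $G^c$ of the residual $P$-measure (the latter still carrying almost all of the $P$-budget). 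Letting $\eta\downarrow 0$ yields
\[
h_\mu(p)\ \ge\ \tau^+(\Omega)+h_\lambda(p),\qquad p\in(0,1).
\]
Combining the two bounds and using $\mu(\Omega)=\lambda(\Omega)+\tau^+(\Omega)-\tau^-(\Omega)$ gives the key estimate
\[
h_\mu(p)-p\,\mu(\Omega)\ \ge\ (1-p)\,\tau^+(\Omega)+p\,\tau^-(\Omega)+\big(h_\lambda(p)-p\,\lambda(\Omega)\big)\ \ge\ 0,
\]
a sum of three nonnegative terms.

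It then remains to rule out equality. If $h_\mu(p_0)=p_0\,\mu(\Omega)$ for some $p_0\in(0,1)$, all three terms must vanish: the first two force $\tau^+(\Omega)=\tau^-(\Omega)=0$, i.e.\ $\tau=0$, while the third gives $h_\lambda(p_0)=p_0\,\lambda(\Omega)$, whence the equality case of the rearrangement inequality forces $\lambda=\lambda(\Omega)P$. Then $\mu=\lambda=\mu(\Omega)P$ is a scalar multiple of $P$, contradicting linear independence; this yields the strict inequality for $s_\mu$, and with it, via conjugacy, the one for $\iota_\mu$. The hard part will be the singular-part estimate and, more fundamentally, the concavity and chord property of $h_\lambda$ in the merely finitely additive, convex-ranged setting: one cannot appeal to convexity of the joint range $\{(\mu(B),P(B))\mid B\in\Sigma\}$, which genuinely fails (e.g.\ for a two-valued charge $\mu\perp P$), so the argument must proceed through the $\lll$/$\perp$ dichotomy and the rearrangement inequality rather than through a Lyapunov-type splitting of $\mu$ itself.
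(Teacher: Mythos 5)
The paper does not actually prove Lemma~\ref{lem:FH}: it is imported verbatim from \cite[Theorem 8]{Basket1}, so there is no internal argument to compare yours against. Your decomposition strategy is sound and, as far as I can check, every inequality in your chain is correct: the conjugacy reduction to the single upper bound, the Lebesgue splitting $\mu=\lambda+\tau$, the estimate $h_\mu(p)\ge\tau^+(\Omega)+h_\lambda(p)$ (your competitor construction goes through precisely because the decomposition gives $\lambda\lll P$, i.e.\ \emph{uniform} absolute continuity, which lets you trim and pad sets of small $P$-measure at negligible cost to $\lambda$), and the resulting identity exhibiting $h_\mu(p)-p\,\mu(\Omega)$ as a sum of three nonnegative terms. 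The one step that needs repair is the appeal to ``the equality case of the rearrangement inequality'': Hardy--Littlewood is a quantile-function statement and is not available for a merely finitely additive $\lambda$ on an algebra, as you yourself suspect at the end. The substitute is already in the paper's toolkit. Submodularity of $s_\lambda$ (\cite[Lemma 7]{Basket1}, as invoked in Lemma~\ref{lem:concave}) together with the convex range of $P$ makes $h_\lambda$ midpoint-concave and bounded, hence concave on $[0,1]$; since $h_\lambda(0)=0$, concavity plus $h_\lambda(p_0)=p_0h_\lambda(1)$ at a single interior point forces $h_\lambda$ to coincide with the chord on all of $[0,1]$, and then $s_\lambda(A)+s_\lambda(A^c)=\lambda(\Omega)=\lambda(A)+\lambda(A^c)$ combined with $s_\lambda\ge\lambda$ setwise gives $\lambda=\lambda(\Omega)P$, contradicting linear independence exactly as you intend. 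The chord property $h_\lambda(p)\ge p\,\lambda(\Omega)$ is literally the last assertion of Lemma~\ref{lem:concave}. With those two substitutions your argument becomes a complete, self-contained proof of the cited result, which is more than the paper itself offers.
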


The final ancillary result in this appendix provides a sufficient (and necessary) condition under which $s_\mu$ is a submodular game of bounded variation.
Another term that appears in its statement is {\em exactness}. 
A game $v$ is exact if, for all $A\in\Sigma$, $v(A)=\sup_{\mu\in\acore_v}\mu(A)$. 

\begin{lemma}\label{lem:concave}
Let $\mu\in\ba$ such that its positive part $\mu^+=\mu\vee 0$ satisfies $\mu^+\ll P$. Then $s_\mu$ is a submodular exact game of bounded variation. Moreover, for all $A\in\Sigma$ with $P[A]\in(0,1)$, 
$$s_\mu(A)\ge \mu(\Omega)P[A].$$
\end{lemma}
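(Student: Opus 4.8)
The plan is to show that everything is governed by the one–dimensional profile of $s_\mu$. Since $s_\mu(A)=\sup\{\mu(B)\mid B\in\Sigma,\,P(B)=P(A)\}$ depends on $A$ only through $P(A)$, the convex range of $P$ produces a function $h_\mu\colon[0,1]\to\R$ with $s_\mu=h_\mu\circ P$, and $h_\mu$ is finite because $\mu\in\ba$ is bounded. First I would record that $s_\mu$ is a game: for $P(B)=0$ the hypothesis $\mu^+\ll P$ forces $\mu(B)\le\mu^+(B)=0$, while $\varnothing$ itself contributes $0$, so $h_\mu(0)=s_\mu(\varnothing)=0$. The heart of the matter is then the single assertion that $h_\mu$ is \emph{concave}: submodularity of $s_\mu$ is equivalent to it, since realising the measures of $A\setminus B$, $A\cap B$, $B\setminus A$ by convex range reduces the submodular inequality $s_\mu(A)+s_\mu(B)\ge s_\mu(A\cup B)+s_\mu(A\cap B)$ to the elementary inequality $h(u)+h(w)\ge h(u+w-c)+h(c)$, valid for any concave $h$ whenever $c\le u,w$ and $u+w-c\le 1$.

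To isolate the mechanism I would reduce to an absolutely continuous charge. Writing the Lebesgue decomposition $\mu=\mu_{ac}+\mu_s$ with $\mu_{ac}\lll P$ and $\mu_s\perp P$ (\cite[Theorem 6.2.4]{BR}), the singular part is carried—up to arbitrarily small total variation—on events of arbitrarily small $P$-measure. Hence, for $P(A)\in(0,1)$, a near-optimiser for $s_\mu(A)$ may avoid this support at negligible cost in $P$-measure, and a short approximation argument exploiting the room in $A^c$ gives $s_\mu=s_{\mu_{ac}}$; the hypothesis $\mu^+\ll P$ is precisely what guarantees that no positive singular mass is discarded in the process. Since $\mu_{ac}\ll P$, Lemma~\ref{lem:sup} applies to $\nu:=\mu_{ac}$ and yields the sublinear, $P$-invariant, continuous functional $\rho_\nu$ with $s_\mu(A)=s_\nu(A)=\rho_\nu(\ind_A)$.

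Concavity of $h_\mu=h_\nu$ is then the genuine obstacle, and I expect it to rest on the Hardy--Littlewood/rearrangement structure already visible in the definition of $\rho_\nu$: the supremum $\sup_{Y\sim_P X}\E_\nu[Y]$ allocates the ``largest'' part of $\nu$ first, so that successive increments $h_\nu(p+\delta)-h_\nu(p)$ are nonincreasing in $p$. In the countably additive case this is transparent, since $h_\nu(p)=\int_0^p k^\downarrow\,\mathrm{d}u$ for $k=\mathrm{d}\nu/\mathrm{d}\P$ and the decreasing rearrangement $k^\downarrow$, which is manifestly concave. The finitely additive case carries no density, and making the ``best part first'' heuristic rigorous there—through convex-range splittings of near-optimal sets—is the step demanding the most care; this is where convex range of $P$ is indispensable.

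Granting concavity, the remaining claims follow quickly. A finite concave function on $[0,1]$ is of bounded variation, so the $P$-invariant game $s_\mu=h_\mu\circ P$ has bounded variation. Submodularity together with bounded variation makes the associated signed Choquet integral sublinear (cf.\ \cite{WWW}), whence \cite[Proposition 1.1]{Svindland} represents it as an upper envelope $\sup_{\zeta\in\mathcal M}\E_\zeta[\cdot]$ with each $\zeta\ll P$ and $\zeta\le s_\mu$; this exhibits $s_\mu$ as the upper envelope of its anticore and yields exactness, exactly as in the proof of Proposition~\ref{main3}. Finally, the pointwise bound is immediate from Lemma~\ref{lem:FH}: if $\mu$ is linearly independent of $P$ it gives the strict inequality $s_\mu(A)>\mu(\Omega)P(A)$ for $P(A)\in(0,1)$, while if $\mu=cP$ then $s_\mu(A)=cP(A)=\mu(\Omega)P(A)$; in either case $s_\mu(A)\ge\mu(\Omega)P(A)$.
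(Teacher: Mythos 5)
Your outline reproduces the right skeleton, and two of your steps are actually cleaner than the paper's: the game property follows immediately from $\mu(B)\le\mu^+(B)=0$ for $P(B)=0$ (the paper routes this through the Lebesgue decomposition), and deducing $s_\mu(A)\ge\mu(\Omega)P(A)$ from Lemma~\ref{lem:FH} by splitting into the linearly dependent and independent cases is perfectly valid (the paper instead uses sublinearity of the Choquet integral and \cite[Lemma 8]{Basket1}). However, the central assertion of the lemma --- submodularity of $s_\mu$ --- is exactly the step you leave open. You correctly reduce it to concavity of $h_\mu$, but then explicitly defer the proof of concavity in the finitely additive case as ``the step demanding the most care'' without supplying it; the rearrangement formula $h_\nu(p)=\int_0^p k^\downarrow\,\mathrm{d}u$ you invoke presupposes a countably additive $\P$ and a Radon--Nikod\'ym density, neither of which is available here. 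The paper closes this gap by citing \cite[Lemma 7]{Basket1}, which establishes submodularity of $s_\mu$ directly in the charge setting; without that (or an equivalent argument) your proof is incomplete at its core.

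Two further steps are shakier than you acknowledge. First, the reduction $s_\mu=s_{\mu_{ac}}$ via a ``short approximation argument'' is not justified by the hypothesis $\mu^+\ll P$: absolute continuity in the null-set sense does not control $\mu^+$ on events of \emph{small but positive} $P$-measure, which is what your ``avoid the singular support at negligible cost'' argument needs (that would require $\mu^+\lll P$). Second, your route to exactness invokes \cite[Proposition 1.1]{Svindland}, a result about law-invariant functionals on $L^\infty_\P$ over a countably additive probability measure, in a setting where $P$ is only a charge; moreover an upper-envelope representation by charges $\zeta\le s_\mu$ does not by itself produce \emph{normalised} elements $\zeta(\Omega)=s_\mu(\Omega)$, as exactness via the anticore $\acore_{s_\mu}$ requires. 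The paper obtains both bounded variation and exactness in one stroke from \cite[Theorem 4.7]{MariMont}, which applies to bounded submodular games; once submodularity and boundedness ($\sup_{A\in\Sigma}|s_\mu(A)|\le\sup_{A\in\Sigma}|\mu(A)|<\infty$) are in hand, that citation is the efficient way to finish.
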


\begin{proof}
    By \cite[Lemma 7]{Basket1}, the set function $s_\mu$ is submodular in the sense that it satisfies \eqref{eq:submodular}. 
    However, $s_\mu$ is also a game, i.e., $s_\mu(\varnothing)=0$: 
    \begin{align*}0&=\mu(\varnothing)\le s_\mu(\varnothing)=\sup\{\mu(N)\colon  N\in\Sigma,\,P[N]=0\}\\
        &\le \sup\{\mu^+(N)\colon N\in\Sigma,\,P[N]=0\}=0.
    \end{align*}
    
Next, we observe that $s_\mu$ is a bounded game. 
Indeed,  
$$\sup_{A\in\Sigma}|s_\mu(A)|\leq \sup_{A\in\Sigma}|\mu(A)|<\infty.$$
By \cite[Theorem 4.7]{MariMont}, $s_\mu$ is of bounded variation and exact. 

For the verification of the last assertion, note that the Choquet integral $\ph$ on $B_s(\Sigma)$ with respect to $s_\mu$ is sublinear and $P$-invariant.
Using \cite[Lemma 8]{Basket1}, we have for all $A\in\Sigma$ with $P[A]\in(0,1)$ that 
$$s_\mu(A)=\ph(\ind_A)\ge \ph(1)P[A]\ge \mu(\Omega)P[A].$$
\end{proof}

\subsection*{Competing Interests}
The authors declare no competing interests.

\subsection*{Acknowledgements}
The authors thank two anonymous reviewers and Peter Wakker for useful comments on an earlier version of the paper.


\begin{thebibliography}{100}
\singlespacing
\itemsep0em 
\scriptsize

\bibitem{Acerbi}Acerbi, C.\ (2002), Spectral measures of risk: A coherent representation of subjective risk aversion. \textit{Journal of Banking and Finance} 26(7):1505--1518.

\bibitem{Ali}Aliprantis, C.\ D., and K.\ C.\ Border (2006), \textit{Infinite Dimensional Analysis: A Hitchhiker's Guide}. 3rd edition, Springer.

\bibitem{Sandwich}Amarante, M.\ (2019), Sandwich Theorems for set functions: An application of the Lehrer-Teper integral. \textit{Fuzzy Sets and Systems} 355:59--66.

\bibitem{Basket1}Amarante, M., F.-B.\ Liebrich, and C.\ Munari (2024), Uniqueness of convex-ranged probabilities and applications to risk measures and games. {\em Mathematics of Operations Research} 50(1):743-763. 

\bibitem{AA}Anscombe, F.\ J., and R.\ J.\ Aumann (1963), A definition of subjective probability. \textit{Annals of Mathematical Statistics} 34(1):199--205.

\bibitem{Bellini}
Bellini, F., P.\ Koch-Medina, C.\ Munari,  and G.\ Svindland (2021),
Law-invariant functionals on general spaces of random variables.
{\em SIAM Journal on Financial Mathematics} 12(1):318--341.

\bibitem{BR}Bhaskara Rao, K.\ P.\ S., and M.\ Bhaskara Rao (1983), \textit{Theory of Charges: A Study of Finitely Additive Measures}. Academic Press.

\bibitem{Chambers}Chambers, C.\ P.\ (2009), An axiomatization of quantiles on the domain of distribution functions.
{\em Mathematical Finance} 19(2):335--342.

\bibitem{Cont}Cont, R., R.\ Deguest, and G.\ Scandolo (2010), Robustness and sensitivity analysis of risk measurement procedures. \textit{Quantitative Finance} 10(6):593--606.

\bibitem{Finetti}de Finetti, B.\ (1931), Sul significato soggettivo della probabilità. {\em Fundamenta Mathematicae} 17:298--329. 

\bibitem{EMWW21}
 {Embrechts, P., T. Mao, Q. Wang, and  R. Wang} (2021), Bayes risk, elicitability, and the Expected Shortfall. \emph{Mathematical Finance} {31}:1190--1217. 

\bibitem{OneAxiom}Fadina, T., P.\ Liu, and R.\ Wang (2023), One axiom to rule them all: A minimalist axiomatization of quantiles.
{\em SIAM Journal on Financial Mathematics} 14(2):644--662.


\bibitem{Fadina}Fadina, T., Y.\ Liu, and R.\ Wang (2024), A framework for measures of risk under uncertainty. {\em Finance and Stochastics} 28:363--390.

\bibitem{FilSvi}Filipovi\'c, D., and G.\ Svindland (2012), The canonical model space of law invariant risk measures is $L^1$. {\em Mathematical Finance} 22(3):585--589.

\bibitem{Fishburn}Fishburn, P.\ C.\ (1970), {\em Utility Theory for Decision Making}. Wiley, New York.


\bibitem{FZ16}
Fissler, T., and  J.\ F.\ Ziegel (2016), Higher order elicitability and Osband's principle. \emph{Annals of Statistics} {44}(4):1680--1707.

\bibitem{FoeSch}F\"ollmer, H., and A.\ Schied (2016), \textit{Stochastic Finance: An Introduction in Discrete Time}. 4th edition, De Gruyter. 

\bibitem{Frittelli}Frittelli, M., and E.\ Rosazza Gianin (2005), Law invariant convex risk measures. {\em Advances in Mathematical Economics} 7:33--46.

\bibitem{Gilboa}Gilboa, I.\ (1987), Expected utility with purely subjective non-additive probabilities. \textit{Journal of Mathematical Economics} 16(1):65--88.

\bibitem{Review}He, X.\ D., S.\ Kou, and X.\ Peng (2022), Risk measures: Robustness, elicitability, and backtesting. \textit{Annual Review of Statistics and Its Applications} 9:141--166.

\bibitem{JST}Jouini, E., W.\ Schachermayer, and N.\ Touzi (2006), Law invariant risk measures have the Fatou property. \textit{Advances in Mathematical Economics} 9:49--71.

\bibitem{KW88}
Kadane, J.\ B., and R.\ L.\  Winkler (1988), Separating probability elicitation from utilities. \emph{Journal of the American Statistical Association} 83(402):357--363.


\bibitem{Kindler}Kindler, J.\ (1988), Sandwich theorems for set functions. {\em Journal of Mathematical Analysis and Applications} 133:529--542.

\bibitem{KSZ}Kr\"atschmer, V., A.\ Schied, and H.\ Zähle (2014), Comparative and qualitative robustness for law-invariant risk measures. \textit{Finance and Stochastics} 18:271--295.


\bibitem{KP16}
{Kou, S., and  X. Peng} (2016), On the measurement of economic tail risk. \emph{Operations Research} {64}(5):1056--1072.

\bibitem{K01}
{Kusuoka, S.} (2001), On law invariant coherent risk measures. \emph{Advances in Mathematical Economics} {3}:83--95.






\bibitem{Lehrer}Lehrer, E., and R. Teper (2008),
The concave integral over large spaces. {\em Fuzzy Sets and Systems} 159:2130--2144.

\bibitem{Liebrich}Liebrich, F.-B.\ (2024), Are reference measures of law-invariant functionals unique? {\em Insurance: Mathematics and Economics} 118:129--141. 

\bibitem{LiebrichMunari}Liebrich, F.-B., and C.\ Munari (2025), Revisiting the automatic Fatou property of law-invariant functionals. 
{\em SIAM Journal on Financial Mathematics} 16(1):SC1--SC11.

\bibitem{MJ}Machina, M.\ J., and D.\ Schmeidler (1992), A more robust definition of subjective probability. \textit{Econometrica} 60(4):745--780.

\bibitem{MariMont}Marinacci, M., and L.\ Montrucchio (2004), Introduction to the mathematics of ambiguity. In: \textit{Uncertainty in
Economic Theory: a Collection of Essays in Honor of David Schmeidler’s 65th Birthday}.

\bibitem{MFE15}
{McNeil, A.\ J., R.\ Frey, and P.\ Embrechts} (2015), \emph{Quantitative
Risk Management: Concepts, Techniques and Tools}. Revised edition.  Princeton, NJ:
Princeton University Press.

\bibitem{MeyNie}Meyer-Nieberg, P.\ (1991), \textit{Banach Lattices}. Springer.

\bibitem{Quiggin}Quiggin, J.\ (1982), A theory of anticipated utility. \textit{Journal of Economic Behavior \& Organization} 3(4):323--343.

\bibitem{Ramsey} Ramsey, F.\ P.\ (1926), Truth and probability. In: Ramsey, F.\ P.\ (1931), \textit{The Foundations of Mathematics and other Logical Essays}, p.\ 156--198.



\bibitem{RUZ06}
Rockafellar, R. T., S. Uryasev, and M. Zabarankin (2006), Generalized deviation in risk analysis. \emph{Finance and Stochastics} 10:51--74.



\bibitem{Rostek}Rostek, M.\ (2010), Quantile maximization in decision theory. {\em Review of Economic Studies} 77:339--371.

\bibitem{Savage} Savage, L.\ J.\ (1954), {\em The Foundations of Statistics}. Wiley.

\bibitem{Schmeidler}
{Schmeidler, D.} (1989), Subjective probability and expected utility without additivity.
\emph{Econometrica} {57}(3):571--587.


\bibitem{Shapiro}Shapiro, A.\ (2013), Consistency of sample estimates of risk averse stochastic programs. \textit{Journal of Applied Probability} 50(2):533--541.

\bibitem{Svindland}
Svindland, G.\ (2010), Continuity properties of law-invariant (quasi-)convex risk functions on $L^\infty$. \textit{Mathematics and Financial Economics} 3:39--43.

\bibitem{Riskmetrics}Wang, Q., R.\ Wang, and Y. Wei (2020), Distortion riskmetrics on general spaces. 
\textit{ASTIN Bulletin} 50(3):827--851.

\bibitem{WWW}Wang, R., Y.\ Wei, and G.\ Willmot (2020), Characterization, robustness and aggregation of signed Choquet integrals. \textit{Mathematics of Operations Research} 45(3):993--1015.

\bibitem{Insurance}Wang, S., V.\ R.\ Young, and H.\ H.\ Panjer (1997), Axiomatic characterization of insurance prices. 
{\em Insurance:
Mathematics and Economics} 21(2):173--183.



\bibitem{W06}
{Weber, S.} (2006), Distribution-invariant risk measures, information, and dynamic consistency. \emph{Mathematical Finance} 16:419--441.




\bibitem{Yaari}Yaari, M.\ E.\ (1987), The dual theory of choice under risk. \textit{Econometrica} 55(1):95--115.


\bibitem{Z16}
Ziegel, J. (2016), Coherence and elicitability. {\em Mathematical Finance} 26:901--918.




\end{thebibliography}
\end{document}